\def\eqref#1{equation~\ref{#1}}
\def\1{\bm{1}}
\DeclareMathAlphabet{\mathsfit}{\encodingdefault}{\sfdefault}{m}{sl}
\SetMathAlphabet{\mathsfit}{bold}{\encodingdefault}{\sfdefault}{bx}{n}
\newcommand{\E}{\mathbb{E}}
\DeclareMathOperator*{\argmax}{arg\,max}
\newtheorem{problem}{Problem}
\newtheorem{observation}{{Observation}}
\newtheorem{claim}{{Claim}}
\newcommand{\Mutate}{\textsc{Mutate }}
\newtheorem{theorem}{Theorem}
\newtheorem{lemma}[theorem]{Lemma}
\newtheorem{definition}{Definition}
\theoremstyle{remark}
\tikzset{
    vertex/.style={circle,draw,minimum size=1.5em},
    edge/.style={->,> = latex'}
}
\newcommand{\OPT}{{\rm OPT}}
\title{Improved Evolutionary Algorithms for Submodular Maximization with Cost Constraints}
\author{\name Yanhui Zhu \email yanhui@iastate.edu \\
      \addr Department of Computer Science\\
      Iowa State University
      \AND
      \name Samik Basu \email sbasu@iastate.edu \\
      \addr Department of Computer Science\\
      Iowa State University
      \AND
      \name A. Pavan \email pavan@cs.iastate.edu\\
      \addr Department of Computer Science\\
      Iowa State University}
\begin{document}

\maketitle

\begin{abstract}
We present an evolutionary algorithm {\sc evo-SMC} for the problem of Submodular Maximization under Cost constraints (SMC). Our algorithm achieves $\nicefrac{1}{2}$-approximation with a high probability $1-\nicefrac{1}{n}$ within $\mathcal{O}(n^2K_{\beta})$ iterations, where $K_{\beta}$ denotes the maximum size of a feasible solution set with cost constraint $\beta$. To the best of our knowledge, this is the best approximation guarantee offered by evolutionary algorithms for this problem. We further refine {\sc evo-SMC}, and develop  {\sc st-evo-SMC}. This stochastic version yields a significantly faster algorithm while maintaining the approximation ratio of $\nicefrac{1}{2}$, with probability $1-\epsilon$. The required number of iterations reduces to $\mathcal{O}(nK_{\beta}\log{(1/\epsilon)}/p)$, where the user defined parameters $p \in (0,1]$ represents the stochasticity probability, and $\epsilon \in (0,1]$ denotes the error threshold. Finally, the empirical evaluations carried out through extensive experimentation substantiate the efficiency and effectiveness of our proposed algorithms. Our algorithms consistently outperform existing methods, producing higher-quality solutions. 
\end{abstract}

\section{Introduction}
\label{sec:intro}
A function $f$ defined over a ground set $V$ is submodular if, for any subsets $S$ and $T$ of $V$ where $S \subseteq T$, for any $x \in V \setminus T$, $f(S \cup \{x\}) - f(S) \geq f(T \cup \{x\}) - f(T)$. The function $f$ is monotone if $f(T) \geq f(S)$ when $S$ is a subset of $T$.
Monotone submodular function optimization is a fundamental problem in combinatorial optimization, applied extensively across diverse fields such as feature compression, deep learning, sensor placement, and information diffusion, among others~\cite{bateni2019categorical,el2022data,li2023submodularity,kempe2003maximizing,zhu2024score+}. Over the last decade, various versions of submodular optimization problems have garnered substantial attention.

A typical monotone submodular maximization problem involves finding a set $S \subset V$ such that $f(S)$ is maximized, subject to a constraint that the set $S$ belongs to a family of sets ${\mathcal I}$ known as a feasible family. 
An example of ${\mathcal I}$ is the set of all sets of size at most $k$, known as the cardinality constraint. Although many constrained submodular maximization problems are known to be NP-hard, several variants admit efficient approximation algorithms and have been amenable to rigorous theoretical analysis. 
This work specifically focuses on monotone submodular maximization under cost/knapsack constraints (SMC). Here, in addition to the monotone submodular function $f$, there is a modular cost function $c$ over the ground set $V$. The cost function has the property that $c(S) = \sum_{x \in S} c(x)$. 
The objective of SMC is to identify a set $S$ maximizing $f(S)$ while ensuring that $c(S)$ remains within a prescribed budget $\beta$. When the cost function is uniform, SMC reduces to the classical cardinality constraint submodular maximization problem. \citet{khuller1999budgeted} and \citet{krause2005note} proposed greedy algorithms for SMC that achieve an approximation ration of $\frac{1}{2}(1 - 1/e)$, using $O(n^2)$ calls
to the underlying submodular function $f$. The work of ~\cite{sviridenko2004note} provided an algorithm that achieves a tight $(1-\nicefrac{1}{e})$ approximation ratio for the SMC problem but has a very high time complexity of $O(n^5)$.
Subsequent research efforts have introduced variants of the greedy algorithm aiming to improve runtime at a slight expense of the approximation quality~\cite{feldman2022practical,yaroslavtsev2020bring,li2022submodular,EneN19,badanidiyuru2014fast,abs-2008-05391}. 

Greedy algorithms construct solution sets iteratively by ``greedily" adding one element during each iteration. 
This process continues as long as the underlying constraint is satisfied. While these algorithms can be rigorously analyzed, providing approximation guarantees for solution quality, they do have drawbacks.  Greedy algorithms can get stuck in a local optimum. Moreover, greedy algorithms are fixed-time algorithms -- meaning they can only be executed for a fixed number of iterations; even in scenarios where more computational resources (time) can be afforded, they will not produce higher-quality solutions.

Another approach to submodular optimization involves evolutionary algorithms. These algorithms mimic the population evolution procedure involving random \textit{mutations}. These algorithms behave as follows. They maintain a set of feasible candidate solution sets. During each iteration, a candidate set is randomly selected for mutation. If the mutated set yields a higher-quality solution set (satisfying the constraint), then it replaces the lower-quality solution set. This approach is appealing because the random mutations can aid in moving away from the local optima. Moreover, there is no apriori bound on the number of iterations. If resources permit, the algorithm can be run for a much longer time and potentially could produce higher-quality solutions. 

The  work of~\cite{qian2017subset} used a evolutionary algorithm framework Pareto Optimization to achieve a $\frac{1}{2}(1-\nicefrac{1}{e})$ approximate solution to SMC. However, to guarantee this approximation ratio, the algorithm has to run exponentially many iterations. 
A subsequent work~\cite{bian2020efficient} proposed an evolutionary algorithm EAMC with improved runtime. They proved that when the algorithm EAMC  is run for $O(n^2K_\beta)$ iterations then the produced solution has an approximation ratio of $\frac{1}{2}(1-\nicefrac{1}{e})$. 
Furthermore, empirical results of their work showed that EAMC produces solutions significantly better than those produced by greedy-based approaches. 
The approximation ratio of $\frac{1}{2}(1 - \nicefrac{1}{e})$ that is achieved by these evolutionary algorithms is not competitive compared with the best known $(1-\nicefrac{1}{e})$ approximation ratio by the greedy algorithm (though with a much higher time complexity).  
Therefore, a significant objective is to design novel and efficient evolutionary algorithms that offer stronger approximation ratios.

\subsection{Our Contributions}
We design an evolutionary algorithm {\sc evo-SMC} that achieves an approximation ratio of $\nicefrac{1}{2}$. 
This marks a significant enhancement in the approximation guarantees for evolutionary algorithms applied to SMC. 
The $\nicefrac{1}{2}$ approximation ratio is attained when the algorithm is executed for $\mathcal{O}(n^2K_{\beta})$ iterations, which is still cubic. To address this, we 
refine {\sc evo-SMC} and develop {\sc st-evo-SMC}, which has a $\nicefrac{1}{2}$ approximation with probability $1-\epsilon$ and requires only $\mathcal{O}(nK_{\beta}\ln{(1/\epsilon)}/p)$ iterations. 
In this algorithm, $p \in (0,1]$ is the stochasticity probability that controls the candidate set selection. When $p=0$, {\sc st-evo-SMC} reduces to the original {\sc evo-SMC}.  Algorithm {\sc st-evo-SMC} improves the running time of {\sc evo-SMC} by a magnitude of $n$. Table \ref{table:2} compares the approximation ratios of our algorithms with the state-of-the-art algorithms.

To supplement the theoretical results, we conduct experiments across diverse application domains, such as influence maximization, vertex cover, and sensor placement. The empirical results demonstrate that our algorithms produce higher-quality solutions than the state-of-the-art evolutionary algorithms. The experiments also empirically demonstrate that, when allowed to run for a longer time,  our algorithms perform better than the greedy-based algorithm with the same approximation ratio of $\nicefrac{1}{2}$. As an implementation contribution, we show that the bloom filters \cite{bloom1970space} can be used in evolutionary algorithms to avoid duplicate evaluations resulting in low memory and execution times.

\subsection{Additional Related Work}
The classical work of \cite{nemhauser1978analysis} presented a greedy algorithm that achieves a $(1-\nicefrac{1}{e})$ approximation ratio for the cardinality constraint problem and makes $O(kn)$ calls to the monotone submodular function $f$. The works of~\cite{qian2015subset,friedrich2015maximizing} designed evolutionary algorithms for this problem using the Pareto Optimization framework while achieving the same approximation guarantee of $(1 - \nicefrac{1}{e})$. The subsequent work of~\cite{crawford2019faster} designed evolutionary algorithms that significantly reduced the runtime while achieving an approximation ratio of  $(1-\nicefrac{1}{e}-\epsilon)$.
For the dynamic cost constraints, Pareto Optimization \cite{roostapour2022pareto} and its variant \cite{bian2021fast} are proven to admit $\nicefrac{1}{2}(1-\nicefrac{1}{e})$ approximation.
 The works in~\cite{iyer2013submodular,padmanabhan2023maximizing} considered the scenario where the function  $c$ is also submodular. A related problem to SMC is  maximizing $f-c$, which has been studied in \cite{harshaw2019submodular,jin2021unconstrained,qian2021multiobjective}.   Evolutionary algorithms have been proposed for a few other variants of submodular optimization problems \cite{chen2022budgeted,do2021pareto}.

\paragraph{Organization.}
Preliminary definitions and notations are given in Section~\ref{sec:prelim}. The algorithms and analysis for solving Problem 1 are presented in Section \ref{sec:po} and \ref{sec:bpo}. Section \ref{sec:experiements} discusses applications, experimental setup, datasets, baseline algorithms and results. Finally, we conclude with a discussion in Section \ref{sec:conclusion}. Theoretical analysis of the proposed algorithm in Section \ref{sec:bpo} and additional experimental results are deferred to the appendix.

\begin{table}[t]
\caption{Approximation ratios comparison with SOTAs for Problem \ref{pro:smk}. }
\centering
\begin{tabular}{c c c c c }
    \hline
    Algorithm & Approximation Ratio    \\ [0.5ex] 
    \hline
    {\sc POMC} (\cite{qian2017subset}) & $\nicefrac{1}{2}(1-\nicefrac{1}{e})$  \\
    {\sc EAMC} (\cite{bian2020efficient}) & $\nicefrac{1}{2}(1-\nicefrac{1}{e})$  \\
    {\sc evo-SMC} (this work) & $\nicefrac{1}{2}$  \\
    {\sc st-evo-SMC} (this work) & $\nicefrac{1}{2}$ \\
    \hline
\end{tabular}
\label{table:2}
\end{table}

\section{Preliminaries}
\label{sec:prelim}
Given a ground set $V$ of size $n$ and a set function $f:2^V\rightarrow \mathbb{R}^+$, $f$ is monotone if for any subsets $S \subseteq T \subseteq V$, $f(T) \geq f(S)$. The {\em marginal gain} of adding an element $x \in V \setminus S$ into $S$ is $f(\{x\} \mid S) \triangleq f(S\cup \{x\})-f(S)$. We write $f(\{x\} \mid S) = f(x \mid S)$ for for brevity and assume that $f$ is normalized, i.e., $f(\emptyset)=0$. 
A function $f$ is submodular if for any set $S \subseteq T \subseteq V$ and any element $x \in V \setminus T$, the  marginal gain of the function value of adding $x$ to $S$ is at least the marginal gain in the function value of adding $x$ to a larger set $T$. This property is referred to as {\em diminishing return}. Formally, 
\begin{equation}
\label{eq:marginal-gain}
    f(S \cup \{x\})-f(S)\geq f(T \cup \{x\})-f(T).
\end{equation}

We consider a modular cost function $c: 2^V \rightarrow \mathbb{R}^+$ over a set $S \subseteq V$ where $c(S) = \sum_{s \in S}c(s)$. Note that a function is modular iff the equality of Eq.~(\ref{eq:marginal-gain}) holds.  The {\em density} of an element $e$ adding to set $S$ is defined to be $\frac{f(e|S)}{c(e)}$. Define the maximal size of a feasible set as $K_\beta = \max\{|X|: X \subseteq V \wedge c(X)\leq \beta\}$.

\begin{problem}[Submodular Maximization with Cost constraints (SMC)]
    Given a monotone submodular function $f:2^V\rightarrow \mathbb{R}^+$ and a modular cost function $c:2^V\rightarrow \mathbb{R}^+$, a budget $\beta$, find a subset $X \subset V$ such that
    $\argmax_{X \subset V, c(X) \leq \beta} f(X)$.
    \label{pro:smk}
\end{problem}

The Chernoff bound is an exponentially decreasing upper bound on the tail of random variables. We present the form of the definition that will be used in this paper.
\begin{lemma}[Chernoff Bound]
    Consider independent 0-1 variables $Y_1, \cdots, Y_T$ with the same expectations (means) and $Y=\sum_{i=1}^T Y_i$, if $\E[Y]=\mu$, for a real number $\eta \in (0,1)$, we have
    \[\Pr[Y \leq (1-\eta)\mu] \leq e^{-\eta^2\mu/2}.\]
    \label{lem:chernoff}
\end{lemma}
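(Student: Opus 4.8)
The plan is to prove this lower-tail bound via the standard exponential moment (Chernoff) method. First I would fix an auxiliary parameter $t > 0$ and rewrite the event $\{Y \leq (1-\eta)\mu\}$ as $\{e^{-tY} \geq e^{-t(1-\eta)\mu}\}$, which is valid because $x \mapsto e^{-tx}$ is decreasing. Applying Markov's inequality to the nonnegative random variable $e^{-tY}$ then gives $\Pr[Y \leq (1-\eta)\mu] \leq e^{t(1-\eta)\mu}\,\E[e^{-tY}]$.

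Next I would exploit the independence of the $Y_i$ to factor the moment generating function, $\E[e^{-tY}] = \prod_{i=1}^T \E[e^{-tY_i}]$. Since each $Y_i$ is a $0$-$1$ variable with common mean $p = \mu/T$, each factor equals $1 + p(e^{-t}-1)$, and the elementary inequality $1 + z \leq e^{z}$ bounds it by $e^{p(e^{-t}-1)}$. Multiplying the $T$ factors yields $\E[e^{-tY}] \leq e^{\mu(e^{-t}-1)}$, so that $\Pr[Y \leq (1-\eta)\mu] \leq \exp\big(\mu[t(1-\eta) + e^{-t} - 1]\big)$.

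I would then optimize the exponent over $t > 0$. Differentiating $g(t) = t(1-\eta) + e^{-t} - 1$ and setting $g'(t) = (1-\eta) - e^{-t} = 0$ gives the minimizer $t^\star = -\ln(1-\eta)$, which is positive precisely because $\eta \in (0,1)$. Substituting $t^\star$ back produces the exponent $\mu\big[-\eta - (1-\eta)\ln(1-\eta)\big]$.

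The last and only delicate step is to show $-\eta - (1-\eta)\ln(1-\eta) \leq -\eta^2/2$ on $[0,1)$, which then collapses the bound to the stated $e^{-\eta^2\mu/2}$. I expect this elementary inequality to be the main obstacle, since it is where the clean exponent $\eta^2/2$ emerges from the sharper but unwieldy relative-entropy-type expression. I would handle it by defining $h(\eta) = -\eta - (1-\eta)\ln(1-\eta) + \eta^2/2$, noting $h(0) = 0$, and computing $h'(\eta) = \eta + \ln(1-\eta)$. Since $\ln(1-\eta) \leq -\eta$ for every $\eta \in [0,1)$, we have $h'(\eta) \leq 0$, so $h$ is nonincreasing and hence $h(\eta) \leq h(0) = 0$ throughout, which completes the argument.
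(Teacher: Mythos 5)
Your proof is correct, and it is the standard exponential-moment (Chernoff) argument: Markov's inequality applied to $e^{-tY}$, factorization of the moment generating function by independence, the bound $1+z\le e^z$, optimization at $t^\star=-\ln(1-\eta)$, and finally the calculus inequality $-\eta-(1-\eta)\ln(1-\eta)\le -\eta^2/2$, which you verify correctly via $h'(\eta)=\eta+\ln(1-\eta)\le 0$. Note, however, that the paper itself offers no proof of this lemma: it is stated in the preliminaries as a standard tool (used later in Lemma~\ref{lem:time-po} and in the appendix), so there is no ``paper approach'' to compare against --- your derivation simply supplies, correctly and completely, what the paper takes for granted. One minor observation: the identical-means hypothesis is not actually needed in your argument, since the factorization step only uses $\sum_i \E[Y_i]=\mu$; your proof therefore establishes a slightly more general statement than the one the paper records.
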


\paragraph{Oracles.} In this paper, our algorithms are designed under the value oracle model, which is based on the assumption that there exists an oracle capable of returning the value $f(S)$ with a set $S$. The specific value of $f(S)$ can be the expected influence spread in influence diffusion~\cite{kempe2003maximizing}, the covered vertices in directed vertex coverage, or entropy in sensor placement. Our computational complexity analyses and experiments are based on oracles.

\section{An Evolutionary Algorithm {\sc evo-SMC} }
\label{sec:po}
In this section, we introduce an evolutionary algorithm {\sc evo-SMC} for Problem \ref{pro:smk} along with its theoretical guarantee and analysis. 
We proceed with defining an auxiliary function $g$, referred to as \emph{surrogate function} in~\cite{bian2020efficient} that is useful in this algorithm. 
\begin{definition}[Surrogate function $g(X)$] For a set $X \subseteq V$,
\[ g(X) =
  \begin{cases}
    f(X)/c(X)       & \quad |X|\geq 1\\
    f(X)  & \quad |X| = 0.
  \end{cases}
\]
\label{def:g}
\end{definition}
The pseudocode of {\sc evo-SMC} is presented in Algorithm \ref{alg:po}. The algorithm maintains three sets of candidate solutions $F,G,G'$. The $i$-th 
element of each $F$ and $G$ is of size $i$, and are denoted by $F_i$ and $G_i$. $G'_i$ is an augmentation of set $G_i$ using a maximal marginal gain element (inspired by \cite{yaroslavtsev2020bring}).
The candidate solutions conform to the cost constraint for the problem. The algorithm takes an input $T$, the loop controlling parameter describing the number of iterations, and outputs the set for which the valuation of the objective function $f$ is maximal.    

At every iteration, the algorithm randomly selects a set $S$ from the candidate solution sets $F$ and $G$ and ``mutate" $S$ (lines 3-4). The \Mutate procedure uniformly at random flips the membership of the elements in $S$ with probability $\nicefrac{1}{n}$. More specifically, for every element in $V$, if it appears in $S$, remove it from $S$ with probability $\nicefrac{1}{n}$, otherwise add it to $S$ with probability $\nicefrac{1}{n}$. If the mutated set $S'$ satisfies the budget constraint, the algorithm compares $S'$ with a set in the solution pools $F$ and $G$ with the {\em same} size as $S'$. The corresponding set(s) in the solution pools will be updated if $S'$ is better w.r.t. $g$ value or $f$ value. The algorithm also considers an augmented solution $G'$ with the largest marginal gain (lines 11-14), which will be necessary for the theoretical analysis. 

\begin{algorithm}[t]
    \SetKwInOut{Input}{Input}
    \SetKwInOut{Output}{Output}
    \Input{$f:2^V \rightarrow \mathbb{R}_{\geq 0}, c: 2^V \rightarrow \mathbb{R}_{\geq 0},$ total number of iterations $T\in \mathbb{Z}_{> 0}, $ cost constraint $\beta \in \mathbb{R}_{> 0}$}
    \Output{$\argmax_{X \in \{F_0,\cdots,F_{n},G_0, \cdots, G_{n},G_0', \cdots, G_{n}'\}}f(X)$}
    $F_j \gets \emptyset, G_j \gets \emptyset, G_j' \gets \emptyset$ for all $j \in [0, n-1]$\\
    \For{$t \gets 1$ \textbf{to} $T$} {
        $S \gets \text{Random}(\{F_0,\cdots,F_{n-1},G_0, \cdots, G_{n-1}\})$\\
        $S' \gets$ \Call{\Mutate}{$S$}\\
        $i \gets |S'|$\\
        \If{$c(S') \leq \beta$}{
            \If{$f(F_i) < f(S')$}{
                $F_i \gets S'$
            }
            \If{$g(G_i) < g(S')$}{
                $Q \gets G_i \cup \{v\}$ s.t.
                $v=\argmax_{e \in V\setminus G_i,c(e) \leq \beta - c(G_i)}f(e \mid G_i)$\\
                \If{$f(Q) > f(G'_i)$}{$G'_i \gets Q$}
                $G_i \gets S'$
            }
        }
    }
    \caption{{\sc evo-SMC}}
    \label{alg:po}
\end{algorithm}

\begin{theorem}
    Given a monotone submodular set function $f$, a modular cost function $c$, a cost constraint $\beta$, let $\OPT_{\beta}=\argmax\{f(X) \mid X\subseteq V, c(X)\leq \beta\}$. If $K_{\beta}=\max\{ |X| \mid X \subseteq V \wedge c(X) \leq \beta \}$, then after $T\geq \max\{4en^2K_{\beta}, 16en^2\log n\}$ iterations,  Algorithm \ref{alg:po} outputs $X$ such that with probability $1-\frac{1}{n}$,
    \[f(X) \geq \nicefrac{1}{2} \cdot f(\OPT_{\beta}).\] 
\label{th:po}
\end{theorem}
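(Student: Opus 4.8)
The plan is to split the argument into a \textbf{deterministic target} analysis and a \textbf{probabilistic tracking} analysis. For the target, I would fix an optimal feasible set $\OPT_\beta$ and study the density-greedy trajectory $\hat X_0 \subseteq \hat X_1 \subseteq \cdots$, where $\hat X_0 = \emptyset$ and $\hat X_{i+1}$ adjoins to $\hat X_i$ the feasible element of maximum density $f(e\mid \hat X_i)/c(e)$. This is precisely the sequence that the pool $G$ rewards, since $G_i$ stores the size-$i$ feasible set of largest surrogate value $g$, while $G'_i$ augments it by the single feasible element of largest marginal gain. The goal of the deterministic part is a combinatorial lemma asserting that along this trajectory (together with its augmentations in $G'$ and the best singletons recorded in $F$) some recorded set attains value at least $\nicefrac12\, f(\OPT_\beta)$.

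For that lemma I would run the greedy until the maximum-density remaining element $e^\star$ would overflow the budget, and combine two ingredients. First, \emph{density domination}: at every step each remaining optimal element has density at most that of greedy's pick, so by submodularity the telescoped marginal gains lower bound $f(\hat X_i)$ in terms of the fraction of budget consumed. Second, the overflow decomposition $f(\hat X_m \cup \{e^\star\}) = f(\hat X_m) + f(e^\star\mid \hat X_m)$, which lets me charge the missing optimal mass either to the feasible prefix $\hat X_m$ (captured by $G_m$ and $F_m$) or to a short augmentation (captured by $G'$, or by the singleton $F_1$ using $f(\{e^\star\})\ge f(e^\star\mid \hat X_m)$). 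Splitting the optimum between these two recorded solutions yields the $\nicefrac12$ bound. I expect the delicate point to be extracting the clean constant $\nicefrac12$ rather than the weaker $\nicefrac12(1-\nicefrac1e)$ of EAMC; this is exactly where the best-marginal augmentation $G'_i$, as opposed to merely tracking single elements, must be exploited.

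The probabilistic part then shows the algorithm realizes this trajectory within $T$ iterations. Call an iteration a \emph{good step} if it selects the set in $F\cup G$ currently representing the next trajectory prefix and \Mutate{} adds exactly the required element while flipping nothing else. Selecting a prescribed set among the $2n$ candidates has probability $\tfrac{1}{2n}$, and adding one prescribed element while leaving the other $n-1$ unchanged has probability $\tfrac1n(1-\tfrac1n)^{n-1}\ge \tfrac{1}{en}$, so a good step occurs with probability at least $q=\tfrac{1}{2en^2}$. Since the trajectory has length at most $K_\beta$ and the pools are monotone (an entry is only overwritten by an equal-or-better one, so a prefix once discovered is never lost), it suffices to accumulate $K_\beta$ good steps. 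Writing $Y$ for the number of good steps among $T$ iterations, $\E[Y]=qT$, and choosing $T\ge\max\{4en^2K_\beta,\,16en^2\log n\}$ forces $\mu=\E[Y]\ge \max\{2K_\beta,\,8\log n\}$. The Chernoff bound of Lemma~\ref{lem:chernoff} with $\eta=\nicefrac12$ then gives $\Pr[Y\le K_\beta]\le \Pr[Y\le(1-\eta)\mu]\le e^{-\mu/8}\le e^{-\log n}=\nicefrac1n$, so with probability $1-\nicefrac1n$ at least $K_\beta$ good steps occur and the output is a $\nicefrac12$-approximation.

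The step I expect to be the main obstacle is reconciling the idealized greedy trajectory with the actual pool dynamics: $G_i$ need not equal $\hat X_i$, since it may have been overwritten by a different size-$i$ set of strictly larger density. To push the good-step induction through I would phrase the invariant purely in terms of surrogate and value lower bounds that are monotone under such replacements, and argue that whatever set currently occupies $G_i$ still admits a single feasible element whose addition meets the density guarantee required at level $i+1$. Making this invariant simultaneously preserved under arbitrary improving replacements and strong enough to drive the deterministic $\nicefrac12$ lemma is the crux; the Chernoff counting above is then routine.
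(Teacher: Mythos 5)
Your probabilistic half is essentially the paper's: a good step requires selecting one prescribed set out of $2n$ (probability $\tfrac{1}{2n}$) and flipping exactly one prescribed element (probability $\tfrac1n(1-\tfrac1n)^{n-1}\ge\tfrac{1}{en}$), giving rate $\tfrac{1}{2en^2}$, and the Chernoff computation with $\eta=\nicefrac12$ and $T\geq\max\{4en^2K_\beta,16en^2\log n\}$ is identical to Lemma~\ref{lem:time-po}. The gap is in the deterministic half, and it is genuine: the argument you actually sketch --- density domination along the greedy trajectory plus the overflow decomposition $f(\hat X_m\cup\{e^\star\})=f(\hat X_m)+f(e^\star\mid\hat X_m)$, charging the optimum to the prefix or to a singleton --- is the classical analysis and yields only $\nicefrac12(1-\nicefrac1e)$. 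You correctly sense that the clean $\nicefrac12$ must come from the augmentation pool $G'$, but you never say how, and that ``how'' is the central idea of the paper. The missing ingredient is the maximum-cost optimal element $o^*=\argmax_{e\in\OPT_{\beta}}c(e)$: the good mutation is defined to add $v=\argmax_{e\in\OPT_{\beta}\setminus(S\cup\{o^*\})}f(e\mid S)/c(e)$ (not the globally densest feasible element), and Lemma~\ref{lem:greedy} gives $f(v\mid S)\ge\frac{c(v)}{c(\OPT_{\beta})-c(o^*)}\bigl(f(\OPT_{\beta})-f(S\cup\{o^*\})\bigr)$. The point of this choice is that while $c(S)<c(\OPT_{\beta})-c(o^*)$, \emph{every} optimal element --- in particular $o^*$ itself --- still fits in the residual budget, so the augmented set $G'_{|S|}$ satisfies $f(G'_{|S|})\ge f(S\cup\{o^*\})$ (Claim~\ref{claim:lem-assumption}). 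This enables the dichotomy that produces $\nicefrac12$: either $f(S\cup\{o^*\})\ge\nicefrac12\,f(\OPT_{\beta})$ and $G'$ already certifies the bound, or $f(S\cup\{o^*\})<\nicefrac12\,f(\OPT_{\beta})$ and the lemma drives the induction $f(X)\ge\frac{c(X)}{2(c(\OPT_{\beta})-c(o^*))}f(\OPT_{\beta})$, which yields $\nicefrac12$ as soon as the tracked cost reaches $c(\OPT_{\beta})-c(o^*)$ (and it must within $K_\beta$ good steps). Without introducing $o^*$ and the denominator $c(\OPT_{\beta})-c(o^*)$, no amount of massaging the overflow decomposition gets past $\nicefrac12(1-\nicefrac1e)$.

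The second issue you name --- reconciling the idealized trajectory with pool overwrites --- you explicitly leave as ``the crux'' without resolving it, and your earlier parenthetical that ``a prefix once discovered is never lost'' is false ($G_i$ is overwritten whenever a different size-$i$ set of larger $g$-value appears). The paper's resolution is exactly the kind of replacement-stable invariant you gesture at, and it is worth seeing why it closes the loop: abandon the fixed trajectory entirely, define the good step \emph{adaptively} with respect to whatever set $X_{t-1}=G_{\omega(t-1)}$ currently occupies the tracked slot (Definition~\ref{def:w}), and take as the invariant the pure density bound $f(X)/c(X)\ge\frac{f(\OPT_{\beta})}{2(c(\OPT_{\beta})-c(o^*))}$, which is automatically preserved when $G_\omega$ is overwritten by a higher-$g$ set and is restored after each good step by Lemma~\ref{lem:greedy} (this is Lemma~\ref{lem:Xt}, Cases I-1 and I-2). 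Since Lemma~\ref{lem:greedy} holds for an \emph{arbitrary} current set $S$ in the relevant cost phase, no property of the occupant beyond the invariant is ever needed, which is precisely what makes the induction survive arbitrary improving replacements. As written, your proposal identifies both hard steps but proves neither, so it does not yet establish the theorem.
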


In order to prove the above theorem, we will introduce an auxiliary variable $\omega$, which keeps track of a
\emph{good mutation}. We proceed with the definition of $\omega$ and followed by a necessary lemma for the proof of Theorem \ref{th:po}. 

\begin{definition}[Good mutation and $\omega$]
The valuation of $\omega$ at the end of iteration $t$ is denoted by $\omega(t)$. 
We say that $\omega(0) = 0$, i.e., $\omega$ initially is $0$. 
$\omega(t) = \omega(t-1) + 1$ is incremented iff the following two conditions are met during an iteration $t$
\begin{enumerate}
    \item A specific set $S=G_{\omega(t-1)}$ is selected for mutation at line 3 of Algorithm \ref{alg:po}.
    \item There is exactly one element added to $S$ during \Mutate procedure, nothing else, say $S'=S \cup \{a\}$, where $a = \argmax_{e\in \OPT_{\beta} \setminus (S \cup \{o^*\})} \frac{f(e|S)}{c(e)}$, where $o^* = \argmax_{e \in \OPT_{\beta}} c(e)$.
\end{enumerate}
The above conditions collectively correspond to a good mutation. In other words, the valuation of $\omega$ is incremented at the end of iteration $t$ iff a good mutation occurs. 
\label{def:w}
\end{definition}
 
At the end of iteration $t$, we define $X_{t}$ to be the set where $X_{t} = G_{\omega(t)}$.
Note that, if for iterations $i$ and $j$ such that $\omega(i)=\omega(j)$ and $i>j$, then $f(X_i) \geq f(X_j)$ and $g(X_i) \geq g(X_j)$, 
which are guaranteed by lines 7-14 of Algorithm \ref{alg:po}.

Next, we present the lemma that forms the basis for the proof of the Theorem~\ref{th:po}.
\begin{lemma}
    In Algorithm \ref{alg:po}, at the beginning of an iteration $t$, if the set $S$ is selected for mutation such that 
    \[
    c(S) \leq c(\OPT_{\beta})-c(o^*) \mbox{ where } o^* = \argmax_{e \in \OPT_{\beta}} c(e),
    \] and if the algorithm adds the element $v$ to $S$ where $v = \argmax_{e \in \OPT_{\beta} \setminus (S \cup \{o^*\})} \frac{f(e|S)}{c(e)}$, 
    then,
    \[ f(v \mid S) \geq \frac{c(v)}{c(\OPT_{\beta})-c(o^*)}\left(f(\OPT_{\beta})-f(S\cup \{o^*\})\right). \]
\label{lem:greedy}
\end{lemma}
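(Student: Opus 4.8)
The plan is to prove the lemma by a standard \emph{density} argument that combines monotonicity, the diminishing-returns property (\ref{eq:marginal-gain}), and the maximality of the density of $v$. Throughout I write $O = \OPT_{\beta}$ and let $R = O \setminus (S \cup \{o^*\})$ be the set of optimal elements that are neither in $S$ nor equal to $o^*$; by hypothesis $v \in R$, so $R \neq \emptyset$ and $v$ is well defined.

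First I would upper bound $f(O) - f(S \cup \{o^*\})$ by a sum of marginal gains taken with respect to $S$. By monotonicity, $f(O) \leq f(O \cup S \cup \{o^*\})$, so it suffices to control $f(O \cup S \cup \{o^*\}) - f(S \cup \{o^*\})$. Enumerating $R = \{e_1, \ldots, e_k\}$ in an arbitrary order and telescoping,
\[f(O \cup S \cup \{o^*\}) - f(S \cup \{o^*\}) = \sum_{j=1}^{k} f\big(e_j \mid S \cup \{o^*\} \cup \{e_1, \ldots, e_{j-1}\}\big).\]
Since $S \subseteq S \cup \{o^*\} \cup \{e_1, \ldots, e_{j-1}\}$, the diminishing-returns inequality (\ref{eq:marginal-gain}) bounds each term by $f(e_j \mid S)$, giving $f(O) - f(S \cup \{o^*\}) \leq \sum_{e \in R} f(e \mid S)$.

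Next I would invoke the maximality of the density of $v$. Because $v = \argmax_{e \in R} \frac{f(e \mid S)}{c(e)}$, every $e \in R$ satisfies $f(e \mid S) \leq \frac{c(e)}{c(v)} f(v \mid S)$. Summing over $R$, and using that costs are nonnegative so that $\sum_{e \in R} c(e) \leq \sum_{e \in O \setminus \{o^*\}} c(e) = c(O) - c(o^*)$ (here $R \subseteq O \setminus \{o^*\}$), I obtain
\[f(O) - f(S \cup \{o^*\}) \leq \frac{f(v \mid S)}{c(v)} \sum_{e \in R} c(e) \leq \frac{f(v \mid S)}{c(v)}\big(c(\OPT_{\beta}) - c(o^*)\big).\]
Rearranging yields exactly the claimed bound.

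The argument is essentially routine once assembled correctly, so I do not anticipate a genuine obstacle; the two places that require care are the telescoping step, where the conditioning sets must be nested so that (\ref{eq:marginal-gain}) applies termwise, and the cost-sum step, where the inclusion $R \subseteq O \setminus \{o^*\}$ together with nonnegativity of $c$ lets me replace the restricted cost sum by $c(\OPT_{\beta}) - c(o^*)$. I note that the cost hypothesis $c(S) \leq c(\OPT_{\beta}) - c(o^*)$, combined with $c(v) \leq c(o^*)$ (since $o^*$ is the costliest element of $O$ and $v \in O$), is what guarantees feasibility of the addition, i.e.\ $c(S \cup \{v\}) \leq \beta$, so that the algorithm can legitimately add $v$; it is not needed for the inequality itself.
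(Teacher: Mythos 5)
Your proof is correct and takes essentially the same route as the paper's: both bound $f(\OPT_{\beta})-f(S\cup\{o^*\})$ by $\sum_{e\in \OPT_{\beta}\setminus(S\cup\{o^*\})} f(e\mid S)$ using monotonicity and submodularity, then invoke the density-maximality of $v$ and the cost bound $\sum_{e} c(e)\leq c(\OPT_{\beta})-c(o^*)$ before rearranging. The only difference is expository—you spell out the telescoping and apply diminishing returns once per term, whereas the paper applies submodularity in two successive steps—so there is nothing substantive to change.
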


\begin{proof}

Since $v = \argmax_{e \in \OPT_{\beta} \setminus (S \cup \{o^*\})} \frac{f(e|S)}{c(e)}  $, for every $u\in \OPT_{\beta}\setminus(S\cup \{o^*\})$, we have
\begin{equation}
    \label{eq:greedy-marginal}
    \frac{f(v \mid S)}{c(v)} \geq \frac{f(u \mid S)}{c(u)}.
\end{equation}
Next consider the following inequalities.
    \begin{align*}
        &f(\OPT_{\beta})-f(S \cup \{o^*\}) \\
        &\leq f(\OPT_{\beta} \cup S \cup \{o^*\} ) - f(S \cup \{o^*\}) \quad\mbox{\emph{due to monotonicity}}\\
        & \leq \sum_{u \in \OPT_{\beta}\setminus(S \cup \{o^*\})} f(u | S \cup \{o^*\} ) \quad\quad \mbox{\emph{due to submodularity}}\\
        & \leq \sum_{u \in \OPT_{\beta}\setminus(S \cup \{o^*\})}  f(u | S ) \quad\quad\quad \mbox{\emph{due to submodularity}}\\
        & \leq \sum_{u \in \OPT_{\beta}\setminus(S \cup \{o^*\})} \frac{f(v|S)}{c(v)} \cdot c(u) \quad\quad \mbox{\emph{from Eq. (\ref{eq:greedy-marginal})}}\\
        &\leq \frac{f(v|S)}{c(v)} \left(c(\OPT_{\beta})-c(o^*)\right).
    \end{align*}
Rearranging the terms concludes the proof of Lemma \ref{lem:greedy}.
\end{proof}

To finish the approximation guarantee proof of Theorem \ref{th:po}, we next prove the following lemma.

\begin{lemma}
    For any iteration $t \in [1, T]$, let $X_t$ be the set with size $\omega(t)$ at the end of iteration $t$, then we have:
    \[ f(X_t) \geq \frac{c(X_t)}{2(c(\OPT_{\beta})-c(o^*))}f(\OPT_{\beta}). \]
\label{lem:Xt}
\end{lemma}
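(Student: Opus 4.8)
The plan is to induct on $\omega(t)$, the number of good mutations recorded up to iteration $t$, which is the same as the size $k=\omega(t)$ of $X_t=G_{\omega(t)}$. Write $B=c(\OPT_\beta)-c(o^*)$, so the claim reads $f(X_t)\ge \frac{c(X_t)}{2B}f(\OPT_\beta)$; for $c(X_t)>0$ this is equivalent to the density statement $g(X_t)\ge \frac{f(\OPT_\beta)}{2B}$, which is the form I would actually track. The base case $k=0$ is immediate since $X_t=\emptyset$ forces $f(X_t)=c(X_t)=0$. Two structural facts of Algorithm \ref{alg:po} drive the step. First, the $G$-pool is density-monotone: $G_i$ is overwritten only by a set of strictly larger $g$-value, so once $g(G_k)\ge \frac{f(\OPT_\beta)}{2B}$ holds it persists for every later iteration with $\omega=k$; hence it suffices to verify the bound at the first iteration $t^\star$ at which $\omega$ reaches $k$. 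Second, at $t^\star$ a good mutation fires, so the selected set is $S=G_{k-1}=X_{t^\star-1}$ (which satisfies the inductive hypothesis) and the mutated set is $S'=S\cup\{a\}$ with $a=\argmax_{e\in \OPT_\beta\setminus(S\cup\{o^*\})}\frac{f(e\mid S)}{c(e)}$. Because the update rule guarantees $g(X_{t^\star})=g(G_k)\ge g(S')$, it is enough to prove the density bound for $S'$ itself.

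For the core estimate I would decompose $f(S')=f(S)+f(a\mid S)$ and $c(S')=c(S)+c(a)$, and feed in two ingredients. The inductive hypothesis gives $\frac{f(S)}{c(S)}\ge \frac{f(\OPT_\beta)}{2B}$. Lemma \ref{lem:greedy}, whose hypothesis $c(S)\le B$ I carry as part of the invariant (the chain only adjoins elements of $\OPT_\beta\setminus\{o^*\}$, so $c(S)\le c(\OPT_\beta\setminus\{o^*\})=B$), gives $f(a\mid S)\ge \frac{c(a)}{B}\left(f(\OPT_\beta)-f(S\cup\{o^*\})\right)$. The target $g(S')\ge \frac{f(\OPT_\beta)}{2B}$ then follows from the mediant inequality $\frac{p+r}{q+s}\ge \min\{\frac{p}{q},\frac{r}{s}\}$ applied to $(f(S),c(S))$ and $(f(a\mid S),c(a))$, \emph{provided} the marginal density of the adjoined element is itself at least $\frac{f(\OPT_\beta)}{2B}$.

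Everything thus reduces to showing $\frac{f(a\mid S)}{c(a)}\ge \frac{f(\OPT_\beta)}{2B}$, and a dichotomy on $f(S\cup\{o^*\})$ is natural here. In the first case, $f(S\cup\{o^*\})\le \tfrac12 f(\OPT_\beta)$, Lemma \ref{lem:greedy} immediately yields $\frac{f(a\mid S)}{c(a)}\ge \frac{1}{B}\left(f(\OPT_\beta)-\tfrac12 f(\OPT_\beta)\right)=\frac{f(\OPT_\beta)}{2B}$, closing the induction cleanly. The main obstacle is the complementary regime $f(S\cup\{o^*\})>\tfrac12 f(\OPT_\beta)$, where the bound of Lemma \ref{lem:greedy} degrades and no longer by itself supplies density $\frac{f(\OPT_\beta)}{2B}$ for the new element; this is precisely when the expensive element $o^*$ carries a large share of the optimum, and I expect it to be the crux. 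The way I would attack it is to avoid demanding fresh progress from the current mutation and instead lean on density-monotonicity of $G_k$ together with the separately maintained augmented pool $G'$ (whose construction in lines 11--14 adjoins a maximum-marginal-gain element precisely to absorb the contribution of $o^*$), arguing that the density envelope already recorded meets the target and is preserved even when this good mutation fails to improve it. Making that persistence argument rigorous -- certifying that the bound attained before entering this regime is never lost, and reconciling the abstract chain with the actual contents of $G_{\omega(t)}$ -- is the delicate part; the surrounding algebra is the routine rearrangement already exhibited in the proof of Lemma \ref{lem:greedy}.
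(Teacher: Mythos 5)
Your reduction to the first iteration $t^\star$ at which $\omega$ reaches $k$, the mediant/density bookkeeping, and your first case $f(S\cup\{o^*\})\le\tfrac{1}{2}f(\OPT_\beta)$ all line up with the paper's proof (its Case I-1 and Case I-2.2). But the regime you yourself flag as the crux, $f(S\cup\{o^*\})>\tfrac{1}{2}f(\OPT_\beta)$, is left unproven, and the route you propose for it --- that ``the density envelope already recorded meets the target and is preserved'' --- cannot be made rigorous, because in that regime the density bound at the new cardinality can simply be \emph{false}, not merely hard to certify. Concretely, take $V=\OPT_\beta=\{o^*,u\}$ with $f(\{o^*\})=1$, $f(\{u\})=\eps$, $f(\{o^*,u\})=1+\eps$, $c(o^*)=10$, $c(u)=1$, $\beta=11$: the good mutation from $S=\emptyset$ adjoins $u$, and $g(\{u\})=\eps$ is far below $f(\OPT_\beta)/\bigl(2(c(\OPT_\beta)-c(o^*))\bigr)=(1+\eps)/2$, while cardinality $1$ need never have recorded anything better; no persistence argument recovers an inequality that fails. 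What the paper does here is different in kind: it weakens the statement carried through the induction to a disjunction. Its Claim~\ref{claim:lem-assumption} shows that if $f(X_{t-1}\cup\{o^*\})\ge\tfrac{1}{2}f(\OPT_\beta)$, then --- since $c(X_{t-1})<c(\OPT_\beta)-c(o^*)$ makes $o^*$ feasible to adjoin --- the augmented pool entry $G'_{\omega(t-1)}$ built in lines 11--14 already satisfies $f(G'_{\omega(t-1)})\ge f(X_{t-1}\cup\{o^*\})\ge\tfrac{1}{2}f(\OPT_\beta)$; i.e., the algorithm's \emph{output} is already $\nicefrac{1}{2}$-approximate, and this fallback persists forever because $F$ and $G'$ only ever improve in $f$-value. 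The induction then proceeds ``WLOG'' under $f(X_{t-1}\cup\{o^*\})<\tfrac{1}{2}f(\OPT_\beta)$, which is exactly your easy case. So the missing idea is not preserving the density bound but abandoning it: once $o^*$ carries half the optimum, you stop tracking $g(G_k)$ and instead certify the theorem directly through the $f$-monotone pools.

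A secondary flaw: your justification of the invariant $c(S)\le c(\OPT_\beta)-c(o^*)$ --- ``the chain only adjoins elements of $\OPT_\beta\setminus\{o^*\}$'' --- is incorrect, because $G_{k-1}$ can be overwritten at any iteration by an arbitrary mutated set of larger $g$-value, so $X_{t^\star-1}$ need not be a subset of $\OPT_\beta$ and its cost can exceed that bound. The paper instead treats $c(X_{t-1})\ge c(\OPT_\beta)-c(o^*)$ as a separate case (I-2.1), where the induction hypothesis alone already gives $f(X_{t-1})\ge\tfrac{1}{2}f(\OPT_\beta)$ --- again a statement about the output rather than the density chain. This is easy to repair (and Lemma~\ref{lem:greedy}'s proof never actually uses that hypothesis), but it reflects the same structural point you missed: the lemma's inequality is only maintained up to the moment the algorithm provably holds a $\nicefrac{1}{2}$-approximate set, after which it is no longer needed.
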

\begin{proof}
    We prove this lemma by regular induction.

    \medskip
    \noindent
    \textbf{Base Case: } For the first iteration $t=1$ of Algorithm \ref{alg:po}, we have two cases B-1 and B-2.

    \noindent
     \textbf{Case B-1}: $\omega$ is not incremented, then we have $\omega(1)=\omega(0)=0$. By the definition of $X_t$, $|X_0|=|X_1|=\omega(1)=0$ and $X_1=\emptyset$. So $X_1$ holds for Lemma \ref{lem:Xt} by the following inequality.
        \[f(X_1)=0 \geq \frac{c(X_1)}{2(c(\OPT_{\beta})-c(o^*))}f(\OPT_{\beta})=0.\]
    \noindent
     \textbf{Case B-2}: $\omega$ is incremented, then we have $\omega(1)=\omega(0)+1=1$. Since this is the first iteration, $X_0=\emptyset$, $|X_1|=1$. Assume that the element in condition 2 of Definition \ref{def:w} is $v$. Then, $X_1=\{v\}$, and by applying Lemma \ref{lem:greedy}, we have
        \begin{align*}
            &f(X_1) = f(X_1) - f(X_0) = f(\{v\}) - f(\emptyset) \\
            &\geq \frac{c(v)}{c(\OPT_{\beta})-c(o^*)}(f(\OPT_{\beta})-f(\{o^*\})). \quad \mbox{\emph{due to Lem. \ref{lem:greedy}}}
            \end{align*}
        If $f(\{o^*\}) \geq \frac{1}{2}f(\OPT_{\beta})$, then as per the lines~11-14 of Algorithm~\ref{alg:po}, it is guaranteed to 
        admit a $\nicefrac{1}{2}$ approximation. Therefore, we focus on the case where $f(\{o^*\}) < \frac{1}{2}f(\OPT_{\beta})$.
        Using this relation, we have
         \begin{align*}   
         f(X_1) 
            & \geq \frac{c(v)}{(c(\OPT_{\beta})-c(o^*))}\times \frac{f(\OPT_{\beta})}{2}\\
            &= \frac{c(X_1)}{2(c(\OPT_{\beta})-c(o^*))}f(\OPT_{\beta}) \quad\mbox{\emph{due to $X_1=\{v\}$.}}
        \end{align*}
    Thus, Lemma \ref{lem:Xt} holds for the base case ($t=1$). 
    Next, we prove that for iterations $t\in(1, T]$, Lemma \ref{lem:Xt} still holds. 
    
\noindent        
\textbf{Induction Steps: } For iterations $t>1$, the induction hypothesis (I.H.) is: at the end of iteration $t-1$, we have
\begin{equation}
    \label{eq:IH-cpo}
    f(X_{t-1}) \geq \frac{c(X_{t-1})}{2(c(\OPT_{\beta})-c(o^*))}f(\OPT_{\beta}).
\end{equation}

For every iteration $t>1$, we analyze $X_t$. We also have two cases I-1 and I-2 based on whether $\omega$ is incremented. \\
\smallskip
    \noindent
    \textbf{Case I-1:} If $\omega$ is not incremented, then we have $\omega(t)=\omega(t-1)$, and $|X_t|=|X_{t-1}|=\omega({t-1})$.
    Furthermore, the Algorithm~\ref{alg:po} (lines 7-14) ensures that $g(X_t) \geq g(X_{t-1})$. 
    Therefore, by Definition \ref{def:g}, we can derive
    \begin{equation}
        g(X_t) \geq g(X_{t-1}) \Rightarrow \frac{f(X_t)}{c(X_t)} \geq \frac{f(X_{t-1})}{c(X_{t-1})}
        \label{eq:I-1}.
    \end{equation}
    
    Rearranging the above inequality, we have
    \begin{align*}
        f(X_t) &\geq \frac{c(X_t)}{c(X_{t-1})}f(X_{t-1}) \quad\quad\mbox{\emph{due to Eq. \ref{eq:I-1}}}\\
        &\geq \frac{c(X_t)}{c(X_{t-1})}\cdot \frac{c(X_{t-1})}{2(c(\OPT_{\beta})-c(o^*))}f(\OPT_{\beta}) \quad\mbox{\emph{due to Eq.~(\ref{eq:IH-cpo})}}\\
        &=\frac{c(X_t)}{2(c(\OPT_{\beta})-c(o^*))}f(\OPT_{\beta}).
    \end{align*}
    This concludes the proof of Lemma \ref{lem:Xt} for Case I-1.  
    
    \noindent
    \smallskip
    \textbf{Case I-2:} If $\omega$ is incremented, then we have $\omega(t)=\omega(t-1)+1$. In this case, $X_{t-1}$ is selected for mutation and the \emph{good} mutation results in $|X_t|=|X_{t-1} \cup \{v\}|$ where $v = \argmax_{u\in \OPT_{\beta} \setminus (X_{t-1} \cup  \{o^*\})}\{ \frac{f(u|X_{t-1})}{c(u)} \}$.

    \noindent
   $\bullet$ \textit{Case I-2.1:} $c(X_{t-1}) \geq c(\OPT_{\beta}) - c(o^*)$\\
        We argue that, in this case, the algorithm already has admitted a $\nicefrac{1}{2}$-approximation at the end of iteration $t-1$.
        \begin{align*}
            f(F_{\omega(t)}) &\geq f(X_{t-1} \cup \{v\}) \geq f(X_{t-1})\\ &\geq \frac{c(X_{t-1})}{2(c(\OPT_{\beta})-c(o^*))}f(\OPT_{\beta}) \quad\quad\mbox{due to I.H.}\\
            &\geq \frac{1}{2} f(\OPT_{\beta}).
        \end{align*}
    \noindent
    $\bullet$ \textit{Case I-2.2:} $c(X_{t-1}) < c(\OPT_{\beta}) - c(o^*)$.
    
    We proceed by proving the following claim.
    
    \begin{claim}
        WLOG, for any iteration $t<T$,  when $c(X_{t-1}) < c(\OPT_{\beta}) - c(o^*)$ we claim that
        \begin{equation}
        \label{eq:lem-assumption}
            f(X_{t-1} \cup \{o^*\}) < \nicefrac{1}{2} f(\OPT_{\beta}).
        \end{equation}
        Otherwise, at the end of iteration $t-1$, the algorithm has already admitted a $1/2$-approximation.
        \label{claim:lem-assumption}
    \end{claim}
    \begin{proof}
        Suppose $f(X_{t-1} \cup \{o^*\}) \geq \nicefrac{1}{2} f(\OPT_{\beta})$, 
        We argue this claim in two cases: 
        
        (1) If $o^* \in X_{t-1}$, then $X_{t-1} \cup \{o^*\} = X_{t-1}$. In this case, $f(X_{t-1}) \geq \nicefrac{1}{2} f(\OPT_{\beta})$, which is guaranteed a $\nicefrac{1}{2}$-approximation. 
        
        (2) If $o^* \notin X_{t-1}$, then due to lines 11-14 of the algorithm, there exists $G'_{w(t-1)}$, which is an augmentation
        of $G_{w(t-1)}$. That is, 
        $|G'_{w(t-1)}| = |X_{t-1} \cup \{o^*\}|$ and $f(G'_{w(t-1)}) \geq f(X_{t-1} \cup \{o^*\}) \geq \nicefrac{1}{2} f(\OPT_{\beta})$. 
        Note that as per the premise of the claim, $c(X_{t-1}) < c(\OPT_{\beta}) - c(o^*)$, i.e., $c(X_{t-1}) + c(o^*)  < c(\OPT_{\beta}) \leq \beta$.  
    \end{proof}
    Going back to the proof for Case I-2 of the Lemma~\ref{lem:Xt}, 
    as per its condition: $c(X_{t-1}) < c(\OPT_\beta) - c(o^*)$. 
    By rearranging Lemma \ref{lem:greedy} with $S=X_{t-1}$, we have:
    \begin{align*}
    &f(X_{t-1} \cup \{v\})\\
    &\geq \frac{c(v)}{c(\OPT_{\beta})-c(o^*)} f(\OPT_{\beta}) + f(X_{t-1})  - \frac{c(v)}{c(\OPT_{\beta})-c(o^*)}f(X_{t-1} \cup \{o^*\})\\
        &\geq \frac{c(v)}{c(\OPT_{\beta})-c(o^*)} f(\OPT_{\beta}) + f(X_{t-1}) - \frac{c(v)}{2(c(\OPT_{\beta})-c(o^*))}f(\OPT_{\beta}) \quad \mbox{\emph{due to Claim~\ref{claim:lem-assumption}}}\\
        &= \frac{c(v)}{2(c(\OPT_{\beta})-c(o^*))}f(\OPT_{\beta}) + f(X_{t-1}) \\
        &\geq \frac{c(v)}{2(c(\OPT_{\beta})-c(o^*))}f(\OPT_{\beta}) + \frac{c(X_{t-1})}{2(c(\OPT_{\beta})-c(o^*))}f(\OPT_{\beta}) \quad \mbox{\emph{due to I.H.}}\\
        &=\frac{c(X_{t-1} \cup \{v\})}{2(c(\OPT_{\beta})-c(o^*))}f(\OPT_{\beta}).
    \end{align*}

    Now observe that
    for a specific cardinality of the parameter, the function $g(\cdot)$ is non-decreasing. Therefore, 
    \[g(X_t) \geq g(X_{t-1} \cup \{v\}) \Rightarrow \frac{f(X_t)}{c(X_t)} \geq \frac{f(X_{t-1}\cup \{v\})}{c(X_{t-1}\cup \{v\})}.\]
    
    Thus, using the above inequalities, we have:
    \begin{align*}
        f(X_t) &\geq \frac{c(X_t)}{c(X_{t-1}\cup \{v\})} f(X_{t-1}\cup \{v\})  \geq \frac{c(X_{t})}{2(c(\OPT_{\beta})-c(o^*))}f(\OPT_{\beta}).
    \end{align*}

This concludes the proof for Lemma \ref{lem:Xt}.
\end{proof}
\paragraph{Justification for \nicefrac{1}{2} - approximation.}
If at the end of some iteration $t'$, $c(X_{t'}) \geq c(\OPT_\beta)-c(o^*)$, then by Lemma \ref{lem:Xt}, \textsc{evo-SMC} admits \nicefrac{1}{2} - approximation. 

If there does not exist such $t'$, we consider the last iteration $t$ where $\omega$ is incremented. 
Similar to the proof of Case I-2 in Lemma \ref{lem:Xt}, we analyze $c(X_{t-1} \cup \{v\})$. 
Note that, $\omega$ is incremented for a good mutation where $v = \argmax_{e \in \OPT_{\beta} \setminus (X_{t-1} \cup \{o^*\})} \{ \frac{f(e|X_{t-1})}{c(e)}\}$. In that case,  we can claim
$c(X_{t-1} \cup \{v\}) \leq \beta$.
This follows from the premise for the case
\begin{align*}
 c(X_{t-1}) &< c(\OPT_\beta) - c(o^*) \leq  c(\OPT_\beta) - c(v) \leq  \beta - c(v) \\
\Rightarrow\  & c(X_{t-1} \cup \{v\}) \leq  \beta .
\end{align*}
(Recall that $o^*$ is the element in $\OPT_\beta$ with the maximal cost.) 

After such a good mutation, if $c(X_{t-1} \cup \{v\}) \geq c(\OPT_{\beta})-c(o^*)$, then the algorithm will output the optimal solution. The reason is that $\omega(t)$ can be up to $n$, i.e., the potential solution set size $|X_t|=|X_{t-1} \cup \{v\}|$ can be as large as $n$. 
Otherwise, the algorithm can keep running until $c(X_{t-1} \cup \{v\}) \geq c(\OPT_{\beta})-c(o^*)$.    

\paragraph{Bound on the Number of Iterations.}
\begin{observation}
$\omega$ is incremented with a probability at least $\frac{1}{2en^2}$ during every iteration.
\label{obs:probability}
\end{observation}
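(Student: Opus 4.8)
The plan is to read off the increment event directly from Definition~\ref{def:w} and lower-bound its probability by treating the two required conditions as independent sub-events. Condition~1 asks that the particular set $S=G_{\omega(t-1)}$ be chosen at line~3, and Condition~2 asks that the \Mutate procedure turn $S$ into exactly $S\cup\{a\}$ for the prescribed element $a$. Since the selection in line~3 uses fresh randomness independent of the coin flips inside \Mutate, I would factor the good-mutation probability as a product of the probability of Condition~1 and the conditional probability of Condition~2 given Condition~1, and then bound each factor separately.

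For Condition~1, the set $S$ is drawn uniformly at random from the $2n$ candidates $\{F_0,\dots,F_{n-1},G_0,\dots,G_{n-1}\}$, so the probability of picking the specific set $G_{\omega(t-1)}$ is exactly $\tfrac{1}{2n}$. For Condition~2, I would use the description of \Mutate: each of the $n$ elements of $V$ is flipped independently with probability $\tfrac{1}{n}$. Because $a\in\OPT_{\beta}\setminus(S\cup\{o^*\})$ is not in $S$, producing exactly $S'=S\cup\{a\}$ requires flipping $a$ \emph{in} (probability $\tfrac{1}{n}$) and leaving each of the other $n-1$ elements untouched (probability $1-\tfrac{1}{n}$ each). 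By independence of the per-element flips, this conditional probability equals
\[
\frac{1}{n}\left(1-\frac{1}{n}\right)^{n-1}.
\]

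To finish, I would invoke the standard inequality $\left(1-\tfrac{1}{n}\right)^{n-1}\ge \tfrac{1}{e}$, which gives a lower bound of $\tfrac{1}{en}$ on the Condition~2 factor, and then multiply the two bounds:
\[
\Pr[\text{good mutation at }t]\ \ge\ \frac{1}{2n}\cdot\frac{1}{n}\left(1-\frac{1}{n}\right)^{n-1}\ \ge\ \frac{1}{2n}\cdot\frac{1}{en}\ =\ \frac{1}{2en^2}.
\]
The calculation itself is routine; the only points that need care — and thus the main obstacle — are (i) confirming the independence of the line~3 selection from the \Mutate flips so that the probabilities genuinely multiply, and (ii) correctly enumerating the mutation event, i.e.\ insisting that \emph{no} other element is flipped in and \emph{no} element of $S$ is flipped out, which is exactly what forces the $(1-\tfrac{1}{n})^{n-1}$ factor and the appeal to $(1-\tfrac{1}{n})^{n-1}\ge \tfrac1e$.
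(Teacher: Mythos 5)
Your proposal is correct and matches the paper's own proof essentially step for step: both decompose the increment event into the selection probability $\tfrac{1}{2n}$ at line~3 and the mutation probability $\tfrac{1}{n}\left(1-\tfrac{1}{n}\right)^{n-1}\ge \tfrac{1}{en}$, and multiply them using independence to obtain $\tfrac{1}{2en^2}$. No gaps; your explicit justification of why $a$ must be flipped in (since $a\notin S$) and why the factors multiply is just a slightly more careful write-up of the same argument.
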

\begin{proof}
We define the event $R$ as follows: at iteration $t$, $\omega$ is incremented. Event $R$ happens when the two conditions are met as per the Definition \ref{def:w}.
The first condition is satisfied with probability $ \nicefrac{1}{2n}$ (line 3 of Algorithm \ref{alg:po}). 
Since every element of $S$ is flipped with probability $\frac{1}{n}$ independently, the probability that a single element $a$ defined in Definition \ref{def:w} was flipped and nothing else, is $\frac{1}{n}\cdot (1-\frac{1}{n})^{n-1}$. The mutation process is independent of the iteration number $t$. Therefore, 
    $\Pr[R] = \frac{1}{2n} \cdot \frac{1}{n}\cdot (1-\frac{1}{n})^{n-1} \geq \frac{1}{2en^2}.$
\end{proof}

\begin{lemma}
    In Algorithm \ref{alg:po}, define variable $Y_i$ for iteration $i\in [1,T]$ where $Y_i=1$ if $\omega(i)=\omega(i-1)+1$ and $Y_i=0$ otherwise. 
    If $T\geq \max\{4en^2K_\beta, 16en^2\log n\}$, then
    \[ \Pr\left[\sum_{i=1}^T Y_i < K_{\beta}\right] \leq \frac{1}{n}. \]
    \label{lem:time-po}
\end{lemma}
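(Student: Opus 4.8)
The plan is to apply the Chernoff bound (Lemma~\ref{lem:chernoff}) to the sum $Y = \sum_{i=1}^T Y_i$, feeding in the per-iteration lower bound on the success probability supplied by Observation~\ref{obs:probability}. First I would record that, by Observation~\ref{obs:probability}, at every iteration $\omega$ is incremented with probability at least $p := \frac{1}{2en^2}$, and crucially this lower bound holds irrespective of the current state: the selection in line~3 and the single-element flip inside \Mutate both have fixed probabilities that do not depend on the history. Hence each $Y_i$ satisfies $\Pr[Y_i = 1 \mid \text{past}] \geq p$.

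The one technical point to address is that Lemma~\ref{lem:chernoff} is stated for independent $0$-$1$ variables with a \emph{common} mean, whereas here the $Y_i$ only have conditional success probability \emph{at least} $p$ and are not literally i.i.d. I would handle this by a standard stochastic-domination argument: introduce i.i.d.\ Bernoulli$(p)$ variables $Z_i$, coupled so that $Z_i \le Y_i$ almost surely, giving $\sum_i Z_i \le \sum_i Y_i$ and therefore $\Pr[\sum_i Y_i < K_\beta] \le \Pr[\sum_i Z_i < K_\beta]$. It then suffices to bound the lower tail of $Z := \sum_i Z_i$, which has exactly $\mu := \E[Z] = Tp$ and to which Lemma~\ref{lem:chernoff} applies directly.

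Next I would fix the deviation parameter $\eta = \tfrac12$ and split the two hypotheses on $T$ into their two distinct roles. From $T \ge 4en^2 K_\beta$ I get $\mu = Tp \ge 2K_\beta$, so that $K_\beta \le \mu/2 = (1-\eta)\mu$; this lets me write $\Pr[Z < K_\beta] \le \Pr[Z \le (1-\eta)\mu] \le e^{-\eta^2\mu/2} = e^{-\mu/8}$. From $T \ge 16en^2\log n$ I get $\mu \ge 8\log n$, hence $e^{-\mu/8} \le e^{-\log n} = 1/n$. Chaining these two bounds with the domination inequality yields $\Pr[\sum_i Y_i < K_\beta] \le 1/n$, as claimed. (I read $\log$ as the natural logarithm so the constants line up exactly; a different base would simply be absorbed into the constant $16$ in the hypothesis.)

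The main obstacle I anticipate is not the arithmetic but the justification of the domination step: one must argue that the good-mutation event at each iteration has probability at least $p$ \emph{conditioned on the entire past}, so that the $Y_i$ stochastically dominate genuine i.i.d.\ Bernoulli$(p)$ variables and the independence hypothesis of Lemma~\ref{lem:chernoff} can be legitimately invoked. Everything else is routine: matching the two $\max$ terms in the bound on $T$ to the two requirements $K_\beta \le \mu/2$ (centering) and $e^{-\mu/8}\le 1/n$ (tail smallness).
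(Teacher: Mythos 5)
Your proposal is correct and follows essentially the same route as the paper's proof: invoke Observation~\ref{obs:probability} for the per-iteration success probability, apply the Chernoff bound of Lemma~\ref{lem:chernoff} with $\eta=\tfrac12$, and use the two terms in the $\max$ defining $T$ for exactly the two roles you describe (centering $K_\beta \le \mu/2$ and tail smallness $e^{-\mu/8}\le 1/n$). The only difference is that the paper simply asserts that the $Y_i$ are independent with identical expectations and applies the bound directly, whereas you patch this point with an explicit stochastic-domination coupling against i.i.d.\ Bernoulli variables; your treatment of that step is the more careful one.
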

\begin{proof}
    Observation \ref{obs:probability} gives the lower bound probability of $Y_i=1$ for an iteration $i$. Therefore, $\mu = \E[Y] = \E\left[\sum_{i=1}^T Y_i\right] = T \rho \geq T\cdot \frac{1}{2en^2}$. It is immediate that $Y_i's$ are independent and take binary values with identical expectations, so we can apply Chernoff bound as presented in Lemma \ref{lem:chernoff} to claim the following inequalities.
    \begin{align*}
        \Pr\left[\sum_{i=1}^T Y_i < K_{\beta}\right] &\leq \Pr\left[\sum_{i=1}^T Y_i < T \rho/2\right]\leq e^{-T\rho/8} \leq \frac{1}{n}.
    \end{align*}
    The first inequality holds if we take $T\geq 4en^2K_{\beta}$.
    The second inequality follows from the Chernoff bound with $\eta=\nicefrac{1}{2}$. The last inequality holds because $T\geq 16en^2\log{n}$.
\end{proof}

Lemmas \ref{lem:Xt} and \ref{lem:time-po}  conclude the proof of Theorem \ref{th:po}.

\section{Stochastic Evolutionary Algorithm {\sc st-evo-SMC}}
\label{sec:bpo}
In this section, we correlate the idea of biased technique from \cite{crawford2019faster} and design a faster algorithm {\sc st-evo-SMC} for Problem \ref{pro:smk}.

\begin{algorithm}[t]
    \SetKwInOut{Input}{Input}
    \SetKwInOut{Output}{Output}
    \Input{$f:2^V \rightarrow \mathbb{R}_{\geq 0}, c:2^V  \rightarrow \mathbb{R}_{> 0}$\\ $T\in \mathbb{Z}_{> 0}, \beta \in \mathbb{R}_{> 0}, \epsilon \in (0,1], p \in [0,1]$}
    \Output{$\argmax_{X \in \{F_0,\cdots,F_{n-1},G_0, \cdots, G_{n-1},G_0', \cdots, G_{n-1}'\}}f(X)$}
    $F_j \gets \emptyset, G_j \gets \emptyset, G'_j \gets \emptyset$,  for all $j \in [0, n-1]$\\
    $\omega \gets 0, \ell \gets 1, H \gets \lceil en\log{(1/\epsilon)}\rceil$\\
    \For{$t \gets 1$ \textbf{to} $T$} {
        $S \gets \text{Random}(\{F_0,\cdots,F_{n-1},G_0, \cdots, G_{n-1}\})$\\
        \If{$\textsc{Flip-Coin}(p)=\text{heads}$} {
            $S \gets G_\omega$\\
            $\ell \gets \ell + 1$\\
            \If{$\ell~ \% ~H = 0$}{
                $\omega \gets \omega + 1$\\
            }
        }
        $S' \gets$ \Call{\Mutate}{$S$}\\
        $i \gets |S'|$\\
        \If{$c(S') \leq \beta$}{
            \If{$f(F_i) < f(S')$}{
                $F_i \gets S'$
            }
            \If{$g(G_i) < g(S')$}{
                $Q \gets G_i \cup \{v\}$ s.t.
                $v=\argmax_{e \in V\setminus G_i,c(e) \leq \beta - c(G_i)}f(e \mid G_i)$\\
                \If{$f(Q) > f(G'_i)$}{$G'_i \gets Q$}
                $G_i \gets S'$
            }
        }
    }    
    \caption{{\sc st-evo-SMC}}
    \label{alg:bpo}
\end{algorithm}

The algorithm (Algorithm \ref{alg:bpo}) has input parameters $\epsilon \in (0, 1]$ and $p \in (0,1]$ to allow users to balance the approximation guarantee and running time. The larger $\epsilon$ is, the lower the approximation the algorithm admits, and fewer iterations are required. 
One can notice that {\sc st-evo-SMC} is deducted to Algorithm \ref{alg:po} if $p=0$. On the other hand, if $p$ is very large, the algorithm becomes a sampling-based stochastic algorithm similar to \cite{mirzasoleiman2015lazier}. The given parameter $p$ controls the stochasticity probability of finding a ``good" candidate set in the first condition of Definition \ref{def:w}. In Algorithm \ref{alg:po}, the probability of randomly selecting a set for mutation is $\frac{1}{2n}$. However, in {\sc st-evo-SMC}, the probability is at least $p$. Therefore, Algorithm \ref{alg:bpo} can grow $\omega$ rapidly and results in smaller amount of iterations.

\begin{theorem}
    With a monotone submodular function $f$, modular cost function $c$, cost constraint $\beta$, the maximal seed set size $K_\beta$, error threshold $\epsilon \in (0,1]$ and stochasticity probability $p \in (0,1]$, let $\OPT_{\beta}=\argmax_{X\subseteq V, c(X)\leq \beta}\{f(X)\}$. After $T\geq 2enK_{\beta}\ln{(1/\epsilon)}/p $ iterations, Algorithm \ref{alg:bpo} outputs $X$ and admits $\frac{1}{2}$- approximation with probability $1-\epsilon$.
\label{th:bpo}
\end{theorem}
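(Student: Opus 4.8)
The plan is to mirror the two-part structure of the proof of Theorem~\ref{th:po}: an approximation part and an iteration-count part. The crucial observation is that {\sc st-evo-SMC} differs from {\sc evo-SMC} only in how the candidate set $S$ is selected (the biased-coin block, lines~5--9 of Algorithm~\ref{alg:bpo}); the update rules for $F_i$, $G_i$, and $G'_i$ are identical to those of Algorithm~\ref{alg:po}. Consequently the approximation machinery---Lemma~\ref{lem:greedy} and the inductive Lemma~\ref{lem:Xt}---applies unchanged \emph{provided} each increment of $\omega$ is accompanied by a genuine good mutation, so that $X_t=G_{\omega(t)}$ remains the inductively-correct frontier set. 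Granting this (which I return to below), it suffices to show that with probability at least $1-\epsilon$, within $T \geq 2enK_\beta\ln(1/\epsilon)/p$ iterations the counter $\omega$ reaches a value at which $c(X_\omega) \geq c(\OPT_\beta)-c(o^*)$; the adapted Lemma~\ref{lem:Xt} then gives $f(X_\omega) \geq \tfrac12 f(\OPT_\beta)$, with the border cases involving $o^*$ absorbed by the $G'$ bookkeeping exactly as in the justification following Lemma~\ref{lem:Xt}.

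The new ingredient is the analysis of the biased selection. When the coin turns up heads (probability $p$) the algorithm forces $S \gets G_\omega$ and advances the internal counter $\ell$; after every block of $H=\lceil en\ln(1/\epsilon)\rceil$ heads, $\omega$ is incremented. First I would lower-bound the chance that a single focused iteration produces a good mutation: given $S=G_\omega$, the \Mutate procedure adds exactly the optimal-density element $a$ of Definition~\ref{def:w} and flips nothing else with probability $\tfrac{1}{n}(1-\tfrac1n)^{n-1} \geq \tfrac{1}{en}$. Hence the probability that an entire block of $H$ focused iterations contains \emph{no} good mutation is at most $(1-\tfrac{1}{en})^{H} \leq e^{-H/(en)} \leq \epsilon$. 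This is the payoff of biasing: whereas in {\sc evo-SMC} the frontier advanced with probability only $\tfrac{1}{2en^2}$ per iteration (Observation~\ref{obs:probability}), here each block of $\Theta(n\ln(1/\epsilon))$ focused iterations advances it with probability $1-\epsilon$.

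Next I would bound the iteration count. The number of heads among the $T$ iterations is a sum of independent Bernoulli($p$) variables with mean $Tp \geq 2enK_\beta\ln(1/\epsilon)=2K_\beta H$; a Chernoff bound (Lemma~\ref{lem:chernoff}) with $\eta=\tfrac12$ shows the number of heads is at least $K_\beta H$ except with negligible probability, so $\omega$ attains $K_\beta$ within $T$ iterations. Combining this with the per-block estimate---each of the at most $K_\beta$ blocks needed to drive $\omega$ from $0$ to $K_\beta$ contains a good mutation---via a union bound establishes that, with the stated probability, every level is built by a good mutation, keeping the inductive chain of Lemma~\ref{lem:Xt} intact up to the level where $c(X_\omega)\geq c(\OPT_\beta)-c(o^*)$.

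The main obstacle I anticipate is precisely this last coupling. In {\sc evo-SMC} the counter $\omega$ was incremented \emph{only} upon a good mutation, so $G_{\omega(t)}$ was automatically the inductively-correct set; in {\sc st-evo-SMC} it advances deterministically every $H$ heads whether or not a good mutation occurred. I therefore have to argue that, conditioned on at least one good mutation within each block, the set $G_\omega$ present at the instant $\omega$ is incremented coincides with the set produced by the chain of good mutations---using monotonicity of the $g$-based updates so that later updates never overwrite an equally good or better frontier set---and that a block-failure genuinely breaks the chain, so that the union bound over blocks is the correct accounting. Getting the probability bookkeeping tight---reconciling the Chernoff bound on the coin flips with the union bound over the $K_\beta$ blocks so the total error is controlled by $\epsilon$ rather than by $K_\beta$ times the per-block error---is the delicate quantitative step, and is where the precise choice of $H$ relative to $T$ and $p$ must be pinned down.
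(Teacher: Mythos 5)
Your plan is essentially the paper's own proof: your per-block estimate $(1-\tfrac{1}{en})^{H}\le\epsilon$ is exactly the paper's Claim~\ref{claim:Ft} ($\Pr[F_t\mid E_t]\ge 1-\epsilon$, where $E_t$ is ``$\omega$ is incremented at iteration $t$'' and $F_t$ adds ``some iteration of the current block was a good mutation''); your Chernoff bound on the heads count with mean $Tp\ge 2HK_\beta$ and $\eta=\tfrac12$ is the paper's final lemma; and your reuse of Lemma~\ref{lem:greedy} inside an induction over frontier sets is precisely the paper's Lemma~\ref{lem:Xt-bpo}, proved there by strong induction over iterations, split into the cases $\neg E_t$ and $F_t$. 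Your coupling worry is resolved exactly as you anticipate: conditioned on $F_t$, the good mutation at some iteration $r$ of the block produced the set $X_{r-1}\cup\{v\}$ of size $\omega(t)$, and since $G_{\omega(t)}$ is only ever overwritten by sets of larger $g$-value, $g(X_t)\ge g(X_{r-1}\cup\{v\})$, after which the chain of inequalities goes through verbatim.

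One clarification on the ``delicate quantitative step'' you flag at the end: the paper does not resolve it. Its Lemma~\ref{lem:Xt-bpo} asserts the density inequality at each iteration ``with probability $1-\epsilon$'' by conditioning on $F_t$ at the current step while treating the strong induction hypothesis as certain; no union bound over the up-to-$K_\beta$ blocks is taken anywhere. The compounded failure probability is therefore naively $1-(K_\beta+1)\epsilon$ (counting the Chernoff event), and tightening it to $1-\epsilon$ would require inflating $H$ to $\lceil en\ln\bigl((K_\beta+1)/\epsilon\bigr)\rceil$, costing a $\log K_\beta$ factor in $T$. So you should not feel you are missing a trick that makes the accounting come out to $\epsilon$ rather than $K_\beta\epsilon$: your proposal is as complete as the published argument, and the looseness you identified is the paper's, not yours.
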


\noindent
{\em Proof Sketch:} The idea is to bond the increment of $\omega$ (line 9) with a good mutation. If $\omega$ is incremented, then $G_\omega$ at line 6 has been selected for mutation for at least $H$ time, and it is very likely that there ever exists at least one good mutation during those mutations.  
If we define $E_t$: $\omega$ was incremented at the end of iteration $t$; $F_t$: Given $E_t$, there exists at least one good mutation during some iteration(s) since the last iteration $\omega$ was incremented. We have Claim \ref{claim:Ft} showing that $F_t$ happens with probability $1-\epsilon$ given that $E_t$ happens.
\begin{claim}
\label{claim:Ft}
    $\Pr[F_t \mid E_t] \geq 1-\epsilon$.
\end{claim}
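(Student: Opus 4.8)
The plan is to reduce Claim \ref{claim:Ft} to an independent-trials argument that parallels Observation \ref{obs:probability}. First I would unwind the bookkeeping of Algorithm \ref{alg:bpo}: the counter $\ell$ advances by one only on a heads outcome of \textsc{Flip-Coin}$(p)$, and $\omega$ is incremented (line 9) exactly when $\ell$ becomes a multiple of $H$. Hence the event $E_t$ is equivalent to ``at least $H$ heads have occurred since the previous increment of $\omega$,'' and on each of those heads iterations line 6 sets $S \gets G_\omega$. The point is that this \emph{deterministically} satisfies condition~1 of Definition \ref{def:w} (the designated set $G_\omega$ is the one fed to \Mutate), whereas in Algorithm \ref{alg:po} that condition held only with probability $\nicefrac{1}{2n}$; this is precisely where the speedup originates.

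Second, conditioned on $E_t$, I would examine the $H$ heads iterations that triggered the increment. On each of them condition~1 already holds, so the iteration is a \emph{good mutation} iff condition~2 of Definition \ref{def:w} holds, i.e. \Mutate flips in exactly the single element $a=\argmax_{e\in\OPT_{\beta}\setminus(G_\omega\cup\{o^*\})} f(e\mid G_\omega)/c(e)$ and nothing else. Exactly as in the proof of Observation \ref{obs:probability}, this probability equals $\frac{1}{n}\left(1-\frac{1}{n}\right)^{n-1}\geq \frac{1}{en}$, uniformly in the iteration index and in the current contents of $G_\omega$. Since the mutation randomness is independent across iterations and independent of the coin flips that determine $E_t$, the $H$ heads iterations act as independent Bernoulli trials each succeeding with probability at least $\frac{1}{en}$. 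Thus the probability that none of them is a good mutation is at most $\left(1-\frac{1}{en}\right)^{H}\leq e^{-H/(en)}$, and substituting $H=\lceil en\ln(1/\epsilon)\rceil\geq en\ln(1/\epsilon)$ gives $e^{-H/(en)}\leq\epsilon$, so $\Pr[F_t\mid E_t]\geq 1-\epsilon$.

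The step I expect to require the most care is the uniform lower bound across the $H$ trials. The target element $a$ in condition~2 depends on the current set $G_\omega$, which may be overwritten whenever a later mutation updates the size-$\omega$ slot $G_\omega$; thus the trials are not identically distributed and their success events reference a moving target. The resolution is that the per-trial success probability is bounded below by $\frac{1}{en}$ regardless of which element $a$ currently is, provided $\OPT_{\beta}\setminus(G_\omega\cup\{o^*\})\neq\emptyset$ so that a good mutation is still possible; this is the only regime of interest, since once that set is empty the candidate already realizes the optimum. With the per-trial probability uniformly lower bounded, I can dominate the non-identical trials by i.i.d. trials of success probability $\frac{1}{en}$ and multiply the failure probabilities, which is exactly the product bound used above. (I would also flag the minor notational point that the $\log$ in the definition of $H$ must be read as the natural logarithm for the substitution $H\geq en\ln(1/\epsilon)$ to be exact.)
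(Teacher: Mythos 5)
Your proof is correct and takes essentially the same route as the paper's: condition on the $H$ mutations of $G_\omega$ counted by $\ell$, lower-bound each by the per-trial good-mutation probability $\frac{1}{n}\left(1-\frac{1}{n}\right)^{n-1}\geq\frac{1}{en}$, and apply the product bound $1-\left(1-\frac{1}{en}\right)^{H}\geq 1-\epsilon$ with $H=\lceil en\ln(1/\epsilon)\rceil$. Your extra care about the non-identically-distributed trials (the moving target $G_\omega$) and the natural-logarithm reading of $H$ makes the argument slightly more rigorous than the paper's terser version, but the approach is the same.
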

Analogous to the proof of Theorem \ref{th:po}, we need the following lemma to guarantee a $\nicefrac{1}{2}$-approximation ratio.
\begin{lemma}
    For every iteration $t$ such that $1 \leq t \leq T$, let $X_t$ be the set with size $\omega(t)$ at the end of iteration $t$, then with probability $1-\epsilon$, we have:
    \begin{equation}
        f(X_t) \geq \frac{c(X_t)}{2(c(\OPT_{\beta})-c(o^*))}f(\OPT_{\beta}) .
        \label{eq:bpo-lemma}
    \end{equation} 
    \label{lem:Xt-bpo}
\end{lemma}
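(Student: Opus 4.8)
The plan is to re-run the inductive argument of Lemma~\ref{lem:Xt} almost verbatim, replacing the one deterministic ingredient that no longer holds. In {\sc evo-SMC}, incrementing $\omega$ was by Definition~\ref{def:w} synonymous with a good mutation; in {\sc st-evo-SMC} the increment at line~9 is instead triggered by the counter $\ell$ reaching a multiple of $H$, i.e.\ after $G_\omega$ has been drawn for mutation $H$ times. So the strategy is: whenever $\omega$ is incremented at iteration $t$ (event $E_t$), invoke Claim~\ref{claim:Ft} to assert that at least one of those $H$ draws was a good mutation (event $F_t$), and then carry out exactly the same chain of inequalities as in Case~I-2 of Lemma~\ref{lem:Xt}.

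First I would note that, for $|X_t|\ge 1$, the target inequality is equivalent to the scale-free bound $g(X_t)\ge C$ with $C:=f(\OPT_\beta)/\big(2(c(\OPT_\beta)-c(o^*))\big)$, a constant independent of $X_t$; this rewriting makes the surrogate-monotonicity steps transparent. The base case ($\omega=0$, $X_0=\emptyset$, where both sides of the stated inequality vanish) and Case~I-1 (iterations that do not increment $\omega$) are then deterministic and identical to Lemma~\ref{lem:Xt}, relying only on $g$ being non-decreasing at a fixed cardinality together with the induction hypothesis. For Case~I-2 I would condition on $F_t$ and locate a draw within the just-completed batch that applied \Mutate to the current contents $S$ of $G_{\omega(t)-1}$ and produced $S\cup\{v\}$ with $v=\argmax_{e\in\OPT_\beta\setminus(S\cup\{o^*\})} f(e\mid S)/c(e)$, a size-$\omega(t)$ set that then competed for slot $G_{\omega(t)}$ through its $g$-value. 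Because the previous batch's (assumed successful) good mutation had already forced $g(G_{\omega(t)-1})\ge C$ before this batch began, the induction hypothesis applies to $S$, and Lemma~\ref{lem:greedy} with Claim~\ref{claim:lem-assumption} yields $g(S\cup\{v\})\ge C$; monotonicity of $g$ at cardinality $\omega(t)$ then gives $g(X_t)\ge C$. Claim~\ref{claim:Ft} itself follows from the estimate in Observation~\ref{obs:probability}: conditioned on $G_\omega$ being drawn, which happens at each of the $H$ heads, the mutation is good with probability at least $\tfrac1{en}$, so the probability that none of the $H=\lceil en\ln(1/\epsilon)\rceil$ draws is good is at most $(1-\tfrac1{en})^H\le\epsilon$.

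The main obstacle is the probabilistic accounting, not the algebra. The bound for a fixed $X_t$ with $\omega(t)=k$ requires the good mutations of all $k\le K_\beta$ preceding batches to fire, so that $g$ climbs to $C$ along the entire chain $G_1,\dots,G_k$; each batch fails to contain a good mutation with probability at most $\epsilon$ by Claim~\ref{claim:Ft}, and these are the only random events the induction depends on. Controlling their conjunction is where care is needed: a union bound over the $O(K_\beta)$ relevant batches bounds the total failure probability, and keeping it below the target confidence is exactly what calibrates the constant inside $H$ (this is also the point at which the $K_\beta$ factor must be tracked, so that the per-batch error fed into Claim~\ref{claim:Ft} is set accordingly). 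Once Lemma~\ref{lem:Xt-bpo} is in hand, Theorem~\ref{th:bpo} follows as in the deterministic case: a Chernoff bound on the number of heads guarantees that within $T\ge 2enK_\beta\ln(1/\epsilon)/p$ iterations $\omega$ advances through all $K_\beta$ batches, after which $c(X_t)$ crosses $c(\OPT_\beta)-c(o^*)$ and $g(X_t)\ge C$ converts into $f(X_t)\ge\tfrac12 f(\OPT_\beta)$, just as in the justification following Lemma~\ref{lem:Xt}.
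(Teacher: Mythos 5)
Your proposal follows essentially the same route as the paper's proof: the same strong induction with the case split on $\neg E_t$ versus $E_t \wedge F_t$, the same key Claim~\ref{claim:Ft} (proved identically via $(1-\tfrac{1}{en})^{H} \leq \epsilon$), and the same chain of inequalities combining Lemma~\ref{lem:greedy}, Claim~\ref{claim:lem-assumption}, the induction hypothesis, and monotonicity of $g$ at a fixed cardinality. The only place you depart from the paper is the probabilistic bookkeeping, where you are in fact more careful: the paper asserts the $1-\epsilon$ guarantee per iteration while conditioning on the strong induction hypothesis and never compounding failures across the up-to-$K_\beta$ batches, whereas your union bound is the honest accounting --- under it, the paper's choice $H = \lceil en\ln(1/\epsilon)\rceil$ only yields success probability $1 - K_\beta\epsilon$, and recovering a clean $1-\epsilon$ would require recalibrating to $H = \lceil en\ln(K_\beta/\epsilon)\rceil$, a step the paper omits.
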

\noindent
{\em Proof Sketch:} The proof uses strong induction proof methods. The base cases include iterations with $\omega=0$ at time and the first iteration with $\omega=1$. Conditioned on $\neg E_t$, Lemma \ref{lem:Xt} can be verified; when $F_t$ occurs, applying Lemma \ref{lem:greedy} will finish the proof. For inductive steps, the inductive hypothesis is: Eq. (\ref{eq:bpo-lemma}) is true for all iterations $i$ such that $1 \leq i<t$. The proof follows similar to Case I-2 proof in Section \ref{sec:po}.

The following lemma ensures that $\omega$ has been incremented for $K_\beta$ times after $T$ iterations with probability $1-\epsilon$.
\begin{lemma}
    In Algorithm \ref{alg:bpo}, define $0-1$ variables  $Y_i$ for iteration $i\in [1,T]$ where $T \geq 2enK_{\beta}\ln{(1/\epsilon)}/p$. $Y_i=1$ if $\ell$ is incremented at line 7 of Algorithm \ref{alg:bpo}, $0$ otherwise.
    Then,
    \[ \Pr\left[\sum_{i=1}^T Y_i < HK_{\beta}\right] \leq \epsilon. \]
\end{lemma}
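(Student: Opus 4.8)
The plan is to observe that $Y_i$ is nothing more than the indicator of the event that $\textsc{Flip-Coin}(p)$ returns heads at iteration $i$ (line 5 of Algorithm~\ref{alg:bpo}), since $\ell$ is incremented at line 7 exactly when the coin lands heads. The coin is tossed independently at each iteration and succeeds with probability $p$, so the $Y_i$ are independent $0$–$1$ variables sharing the common mean $p$. This puts us squarely in the hypotheses of the Chernoff bound (Lemma~\ref{lem:chernoff}), and the whole argument parallels the proof of Lemma~\ref{lem:time-po}, the difference being that here the per-iteration success probability is the user-chosen $p$ rather than the $\tfrac{1}{2en^2}$ coming from random selection plus a single-element flip (this is precisely what buys the factor-$n$ speedup and the reason we now track $\ell$ reaching $HK_\beta$, which in turn forces $\omega$ to reach $K_\beta$).

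Concretely, I would first set $\mu := \E\!\left[\sum_{i=1}^T Y_i\right] = Tp$ and use the hypothesis $T \geq 2enK_\beta\ln(1/\epsilon)/p$ to conclude $\mu \geq 2enK_\beta\ln(1/\epsilon)$. Since $H = \lceil en\ln(1/\epsilon)\rceil$, the target threshold satisfies $HK_\beta \le \tfrac{1}{2}\mu$ (up to the rounding in the ceiling, which is absorbed by the slack in the constant $e$), so the deviation event $\{\sum_i Y_i < HK_\beta\}$ is contained in $\{\sum_i Y_i < (1-\tfrac12)\mu\}$. Applying Lemma~\ref{lem:chernoff} with $\eta = \tfrac12$ then yields
\[
\Pr\left[\sum_{i=1}^T Y_i < HK_\beta\right] \leq \Pr\left[\sum_{i=1}^T Y_i < \tfrac{1}{2}\mu\right] \leq e^{-\mu/8} \leq e^{-enK_\beta\ln(1/\epsilon)/4} = \epsilon^{\,enK_\beta/4} \leq \epsilon,
\]
where the last step uses $enK_\beta/4 \geq 1$ (valid for $n \geq 2$, $K_\beta \geq 1$) together with $\epsilon \leq 1$.

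There is no deep obstacle here; the content is the clean identification of $Y_i$ as an i.i.d.\ Bernoulli$(p)$ sequence, after which Chernoff applies mechanically. The only genuine care needed is the constant bookkeeping in the middle step: matching the threshold $HK_\beta$ against $\tfrac{1}{2}\mu$ and verifying that the resulting exponential collapses to exactly $\epsilon$, including confirming that the ceiling in $H$ contributes only a lower-order correction that the generous factor $e$ (and the freedom in the choice of $\eta$) can absorb. This lemma then supplies the final ingredient of Theorem~\ref{th:bpo}: combined with Claim~\ref{claim:Ft} and Lemma~\ref{lem:Xt-bpo}, it guarantees that with probability $1-\epsilon$ the counter $\omega$ is incremented at least $K_\beta$ times within $T$ iterations, so that some $X_t$ attains the size driving the $\nicefrac{1}{2}$-approximation.
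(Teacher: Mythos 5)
Your proof is correct and follows essentially the same route as the paper's: identify the $Y_i$ as independent Bernoulli$(p)$ indicators of the coin flip at line 5, note $\mu = Tp \geq 2enK_\beta\ln(1/\epsilon)$, compare the threshold $HK_\beta$ against $\mu/2$, and apply the Chernoff bound (Lemma~\ref{lem:chernoff}) with $\eta = \tfrac12$. The only difference is cosmetic bookkeeping in the last step (you write $e^{-\mu/8} \leq \epsilon^{enK_\beta/4} \leq \epsilon$ where the paper notes $e^{-Tp/8}\leq\epsilon$ since $T \geq 8\ln(1/\epsilon)/p$), and you are in fact more explicit than the paper about the ceiling in $H$ being absorbed by slack.
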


\section{Experiments}
\label{sec:experiements}
{\sc evo-SMC} and {\sc st-evo-SMC} are evaluated on influence maximization, directed vertex cover, and sensor placement with costs. 
We compare our algorithms with the evolutionary EAMC \cite{bian2020efficient} and the deterministic algorithm Greedy+Max \cite{yaroslavtsev2020bring} (\nicefrac{1}{2}-approximation and runs in $O(nK_\beta)$). 
We implement our algorithms and baselines in C++ \footnote{\texttt{https://github.com/yz24/evo-SMC}.}.
We run our algorithms and EAMC 20 times and report the medians. 

\paragraph{Implementation Accelerations.} The mutation procedures in algorithms {\sc evo-SMC}, {\sc st-evo-SMC} and EAMC are performed independently at every iteration. Therefore, the set after mutation $S'$ can stay the same with non-trivial probability or be equal to some set that has already been considered. Thus, to avoid repeated evaluations of the objective $f$ and $g$ over the same set, we maintain a bloom filter \cite{bloom1970space} and pre-check the incoming set $S'$. 
In our implementations and results, the number of evaluations of $f$ is incremented only if an evaluation of $f(S')$ is actually performed.

\subsection{Applications and Experimental Settings}
\paragraph{Influence Maximization with Costs.}
Influence maximization in a social network $G=(V, E)$ seeks to maximize $\E[IC(X)]$, the expected number of influenced users influenced by the propagation of information from a subset of users (seed set) $X \subseteq V$. The Independent Cascade (IC) model estimates $\E[IC(X)]$ with propagation probabilities $p(u,v)$ \cite{kempe2003maximizing}. In real-life scenarios, there is a total budget $\beta$, and each node $v \in V$ has a cost $c(v)$ (can be viewed as an incentive) in the propagation process. Hence, the cost-constrained influence maximization problem can be formulated as: finding $X$ such that $\argmax_{X\subseteq V, c(X) \leq \beta} \E[IC(X)]$. We define a linear cost function $c:V \rightarrow \mathbb{R}^+$ proportional to the out-degree $out(v)$ \cite{jin2021unconstrained}, where $c(v) = \lambda \cdot out(v)^\gamma$ with free parameters $\gamma$ and $\lambda$. If $out(v)=0$, $c(v)$ is set to $1$. In our experiments, we use Facebook \cite{leskovec2012learning} and Film-Trust networks \cite{kunegis2013konect}.
We run the algorithms with fixed budget $\beta=20$ and generate node costs with $\lambda=1.2, \gamma=1.5$.

\begin{figure}[t]
\centering
\subfigure[Film-Trust, $p=0.5$]{
\centering
\includegraphics[width=0.42\textwidth]{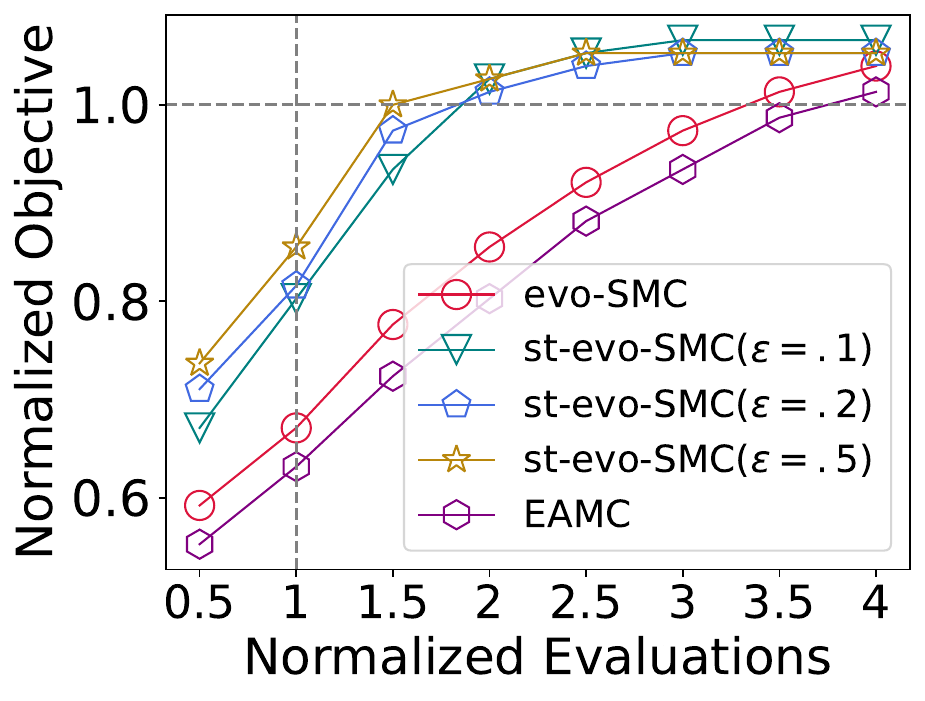}
\label{fig:film-e}
}
\subfigure[Facebook, $p=0.5$]{
\centering
\includegraphics[width=0.42\textwidth]{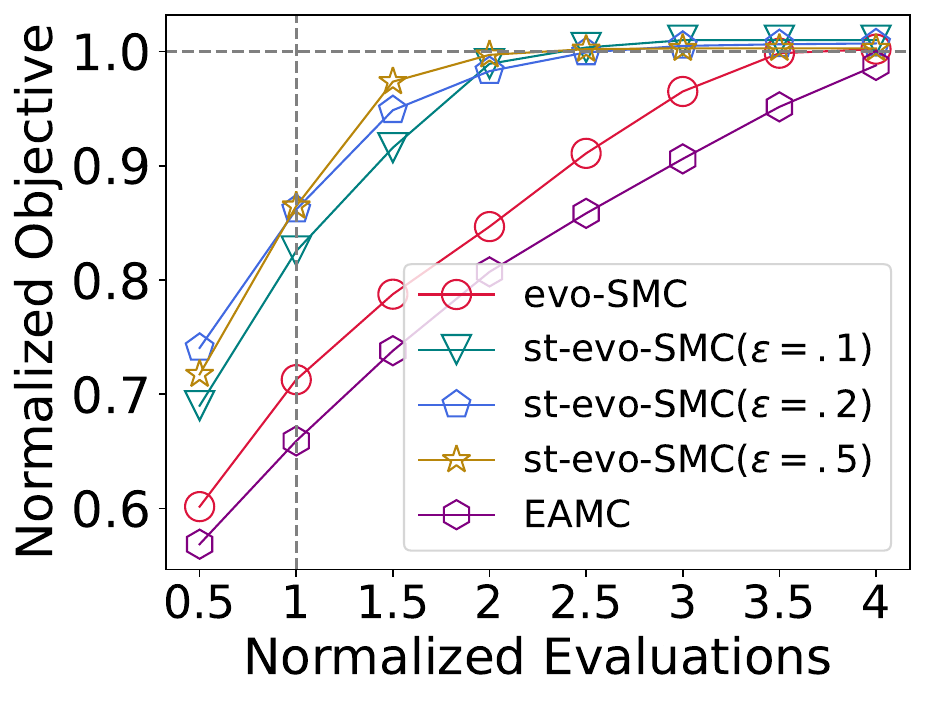}
\label{fig:fb-e}
}
\caption{Comparisons on Film-Trust and Facebook networks.}
\end{figure}
\begin{figure}[t]
\centering
\subfigure[Eu-Email; $q=5$, $p=0.5$]{
\centering
\includegraphics[width=0.42\textwidth]{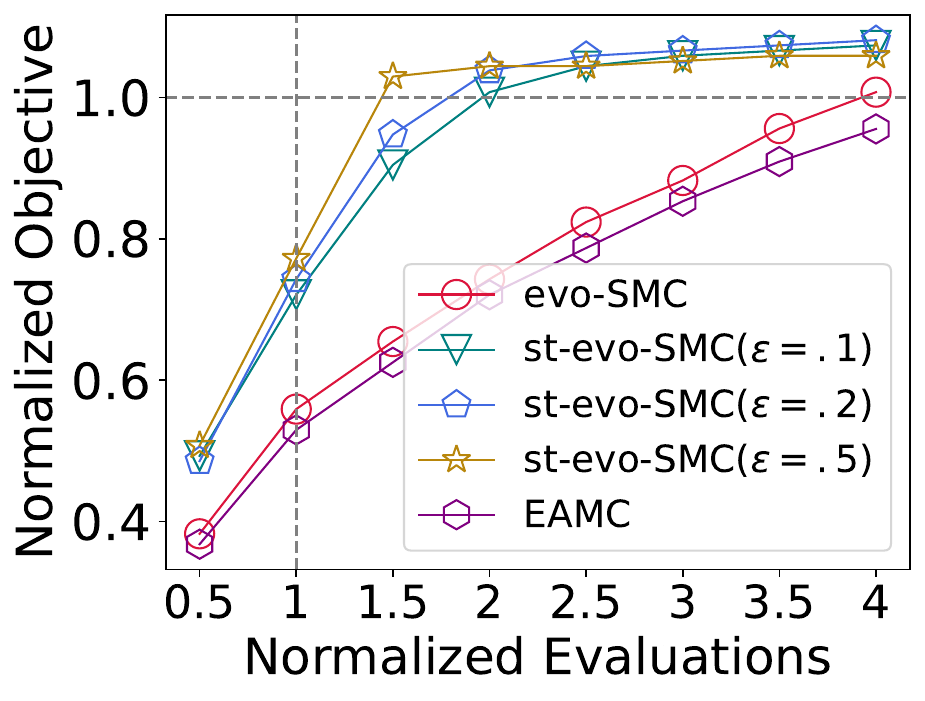}
\label{fig:email-q5-e}
}
\subfigure[Protein; $q=5$, $p=0.5$]{
\centering
\includegraphics[width=0.42\textwidth]{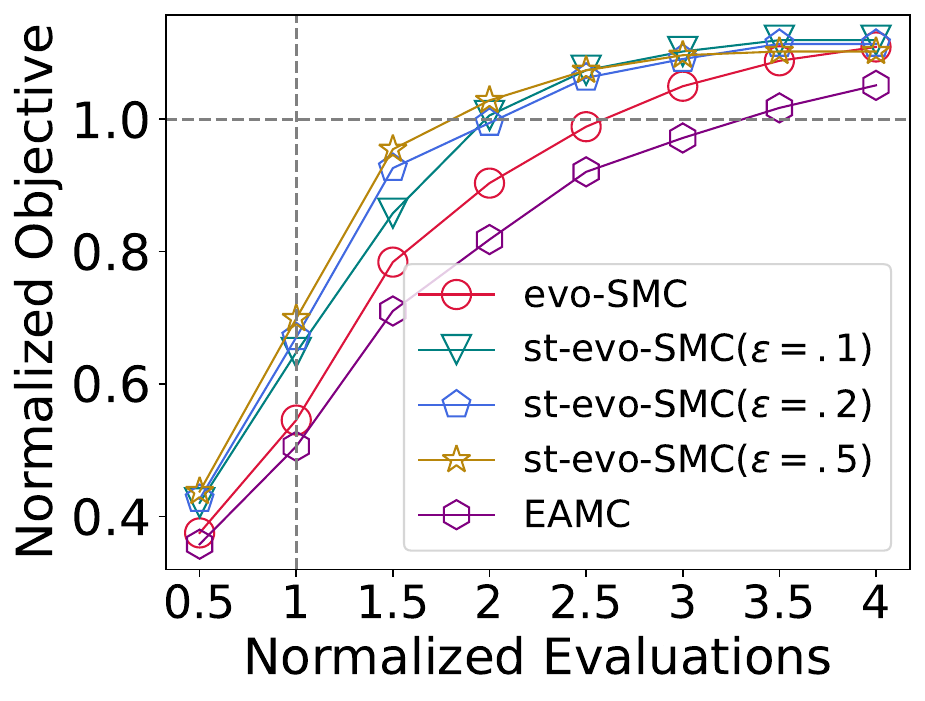}
\label{fig:protein-q5-e}
}
\caption{Comparisons on Eu-Email and Protein network with cost penalty $q=5$.}
\end{figure}

\begin{figure}[t]
\centering
\subfigure[Beijing; $p=0.5$]{
\centering
\includegraphics[width=0.42\textwidth]{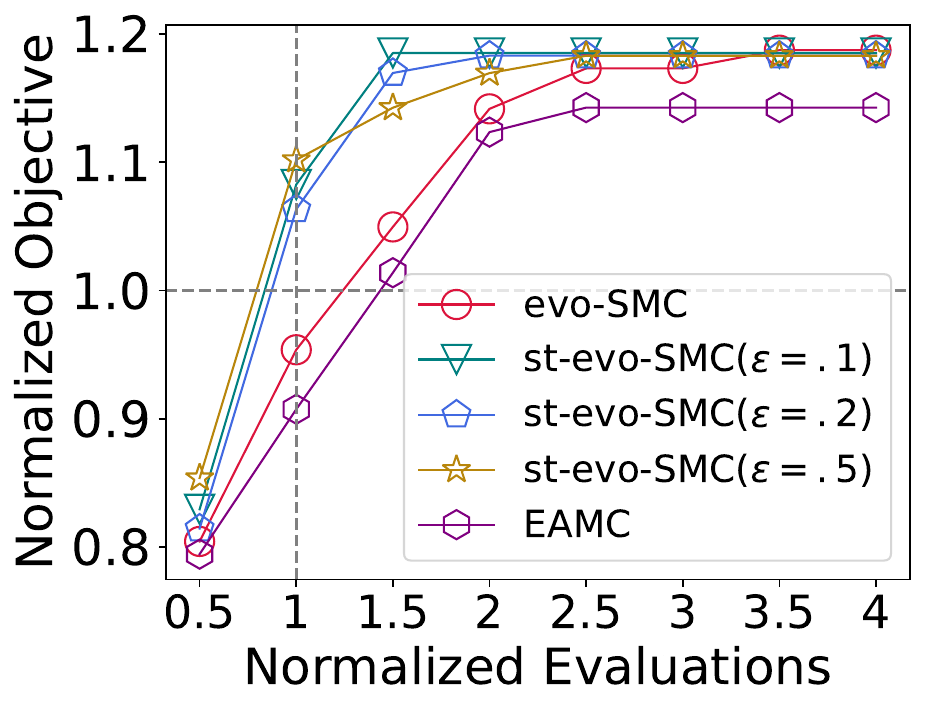}
\label{fig:bj-p}
}
\subfigure[Beijing; $\epsilon=0.1$]{
\centering
\includegraphics[width=0.42\textwidth]{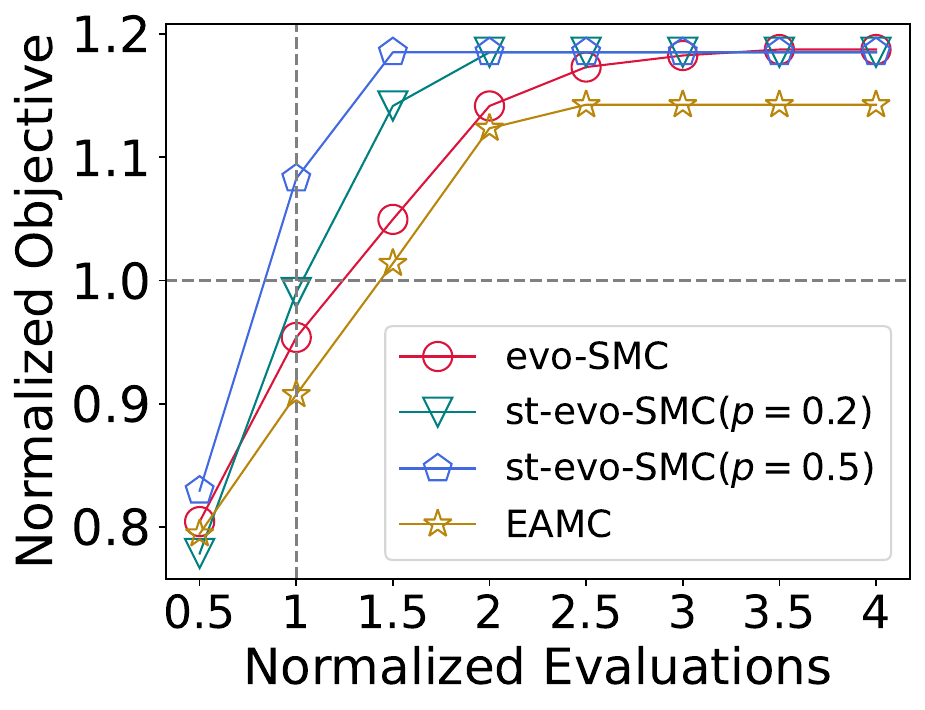}
\label{fig:bj-eps}
}
\caption{Comparisons on Beijing data with various $\epsilon$ and $p$.}
\end{figure}

\paragraph{Directed Vertex Cover with Costs.}
Let $G=(V, E)$ be a directed graph and $w: 2^V \rightarrow \mathbb{R}^{\geq 0}$ be a modular weight function on a subset of vertices. For a vertex set $S \subseteq$ $V$, let $N(S)$ denote the set of vertices which are pointed to by $S$, formally,
$N(S) =\{v \in V \mid(u, v) \in E \wedge u \in S\}$. The weighted directed vertex cover function is $f(S)=\sum_{u \in N(S) \cup S} w(u)$, which is monotone submodular. We also assume that each vertex $v \in$ $V$ has an associated non-negative modular cost function $c(v)$ \cite{harshaw2019submodular} defined by $c(v)=1+\max \{d(v)-q, 0\}$, where $d(v)$ is the out-degree of vertex $v$ and the non-negative integer $q$ is the cost penalty. 
The objective of this task is to find a subset $S$ such that $\argmax_{S\subseteq V, c(S) \leq \beta}\sum_{u \in N(S) \cup S} w(u)$. We use Protein network \cite{stelzl2005human} and Eu-Email network~\cite{leskovec2007graph} in this application. 
We assign each node a weight of 1 and generate costs as mentioned above. We report results of $\beta=30$ and cost penalty $q=5$ for both networks.

\paragraph{Sensor Placement with Costs.}
We use a real-world air quality data (light and temperature measures) \cite{zheng2013u} \cite{bian2020efficient} collected from 36 monitoring stations in Beijing. We calculate the entropy of a sensor placement using the observed frequency. Each sensor is assigned a positive cost from the normal distribution $\mathcal{N}\left(0,1\right)$. The goal is to maximize the submodular Entropy function with a budget $\beta$. In this experiment, we set $\beta = 10$.

\noindent
\subsection{Results} 
The comparison results on influence maximization (Film-Trust and Facebook), vertex cover (Protein and Eu-Email) and sensor placement (Beijing) are illustrated in Figures \ref{fig:film-e}, \ref{fig:fb-e}, Figures~\ref{fig:email-q5-e}, \ref{fig:protein-q5-e} and Figures \ref{fig:bj-p}, \ref{fig:bj-eps}, respectively. 
Note that the objective values and runtime are normalized by the objective value and number of oracle calls made ($O(nK_\beta)$) by the Greedy+Max algorithm. The grey lines in the figures are $1$'s, corresponding to the objective value and oracle evaluations of Greedy+Max. 

Due to the accelerations with bloom filters and pre-check step, our algorithm {\sc evo-SMC} is as efficient as EAMC. Moreover, the stochastic version algorithm {\sc st-evo-SMC} consistently performs better than EAMC and outperforms Greedy+Max after making approximately two times oracle evaluations of Greedy+Max (exceeding the grey horizontal lines). As expected, the objective values of {\sc evo-SMC} grow slowly compared to {\sc st-evo-SMC}. The reason is that {\sc evo-SMC} randomly selects a set to mutate while {\sc st-evo-SMC} performs the selection process with a purpose (with a stochasticity probability $p$). We can also observe that when $\epsilon=0.1$, {\sc st-evo-SMC} produces higher-quality results than with larger error thresholds, which is consistent with the theoretical guarantees. In Figure \ref{fig:bj-p}, we plot the results with various $p$ choices. With a larger $p=0.5$, {\sc st-evo-SMC} has a faster initial increase compared with {\sc st-evo-SMC}($p=0.2$).

During the experiments, we observe that the valuation of cost of partial solutions accumulates to $\beta$ quickly and then stay close to $\beta$. This means there were a very small portion or no elements to evaluate to find the augmented element (line 19 of Algorithm \ref{alg:bpo}), which is a reason why {\sc evo-SMC} is as efficient as EAMC.   

\section{Conclusions}
\label{sec:conclusion}
In this paper, we proposed novel evolutionary frameworks for submodular maximization with cost constraints. Our algorithms achieve competitive approximation guarantees compared to the state-of-the-art evolutionary methods. Empirical studies demonstrate that our algorithms are also efficient. Future work would be designing faster algorithms that can efficiently adapt to the dynamic cost constraint settings. Unlike EAMC \cite{bian2020efficient}, the surrogate function defined in our paper is independent of the budget $\beta$. Therefore, it is promising to investigate how our framework can deal with budget changes in dynamic cost settings. 

\section*{Acknowledgements}
Part of Pavan's work is supported by NSF grant 2130536

\appendix
\newtheorem{manualtheoreminner}{Theorem}
\newenvironment{manualtheorem}[1]{%
  \renewcommand\themanualtheoreminner{#1}%
  \manualtheoreminner
}{\endmanualtheoreminner}

\newtheorem{manuallemmainner}{Lemma}
\newenvironment{manuallemma}[1]{%
  \renewcommand\themanuallemmainner{#1}%
  \manuallemmainner
}{\endmanuallemmainner}

\newtheorem{manualclaiminner}{Claim}
\newenvironment{manualclaim}[1]{%
  \renewcommand\themanualclaiminner{#1}%
  \manualclaiminner
}{\endmanualclaiminner}

\section{Omitted Proofs in Section \ref{sec:bpo}}
\label{sec:app-BPO}

In this section, we present the omitted proofs for Algorithm \ref{alg:bpo} in Section \ref{sec:bpo}. We also provide the \Mutate procedure in Algorithm \ref{alg:mutate} used as a subroutine in algorithms {\sc evo-SMC} and {\sc st-evo-SMC}.

        

\begin{algorithm}[ht]
    \SetKwInOut{Input}{Input}
    \SetKwInOut{Output}{Output}
    \Input{$S$}
    \Output{$S'$}
    $S' \gets S$\\
    \For{$x \in V$} {
        
        \eIf{$x \in S$} {
            $S' \gets S' \setminus\{x\}$ with probability $1/n$
        }
        {
            $S' \gets S'\cup \{x\}$ with probability $1/n$
        }
    }
    \caption{\Mutate}
    \label{alg:mutate}
\end{algorithm}

\begin{manualtheorem}{2}
    With a monotone submodular function $f$, modular cost function $c$, cost constraint $\beta$, the maximal seed set size $K_\beta$, error threshold $\epsilon \in (0,1]$ and stochasticity probability $p \in (0,1]$, let $\OPT_{\beta}=\argmax_{X\subseteq V, c(X)\leq \beta}\{f(X)\}$. After $T\geq 2enK_{\beta}\ln{(1/\epsilon)}/p $ iterations, Algorithm \ref{alg:bpo} outputs $X$ and gives $\frac{1}{2}$ approximation with probability $1-\epsilon$.
\label{app:th:bpo}
\end{manualtheorem}

\begin{proof}
We start with a definition $\sigma(t)$, which characterizes the smallest iteration $i\in[0,t]$ with $\omega(i) = \omega(t-1)$. In other words, $\sigma(t)$ keeps track of the last time $\omega$ was incremented.

\begin{definition}
For every iteration $t \in \mathbb{Z}^{+}$, define
    \[ \sigma(t) = \min\{i\in \{ 0, \cdots, t \}: \omega(i) = \omega(t-1) \}.\] 
    \label{def:sigma_t}
\end{definition}

At the beginning of the algorithm, $\omega=0$, and $\omega$ is incremented if $\ell$ has been incremented for sufficient times (lines 5-11 of Algorithm \ref{alg:bpo}). Table \ref{t:sigma} lists an example of $\sigma(t)$ and the interval $(\sigma(t),t]$.  In this example scenario, the underlined valuations in the first column are the iterations when the corresponding $\omega$ is incremented.  

\begin{table*}[ht]
    \centering
    \caption{$\sigma(t)$ and intervals for the example}
    \begin{tabular}{ |p{2cm}|p{2cm}|p{2cm}|p{2cm}|  }
    \hline
    
    $t$ & $\omega(t)$ & $\sigma(t)$ & $(\sigma(t),t]$ \\
    \hline
    1 & 0 & 0 & $(1,1]$ \\
    {2} & 0 & 0 & $(1,2]$ \\
    \underline{3} & 1 & 0 & $(1,3]$ \\
    {4} & 1 & 3 & $(3,4]$ \\
    {5} & 1 & 3 & $(3,5]$ \\
    \underline{6} & 2 & 3 & $(3,6]$   \\
    {7} & 2 & 6 & $(6,7]$ \\
    $\vdots$ & $\vdots$ & $\vdots$ & $\vdots$\\
    {19} & 2 & 6 & $(6,19]$\\
    \underline{20} & 3 & 6 & $(6,20]$\\
    $\vdots$ & $\vdots$ & $\vdots$ & $\vdots$\\
    \hline
    \end{tabular}
    \label{t:sigma}
\end{table*}

At iteration $t$, we define events $E_t$ and $F_t$ as follows:
\begin{itemize}
    \item $E_t$: At the end of iteration $t$, $\omega(t)=\omega(t-1)+1$, i.e., $\ell$ has been incremented for $H=\lceil en\log{(1/\epsilon)}\rceil$ times. (At both iterations $\sigma(t)$ and $t$, $\omega$'s are incremented and during every iteration $r\in (\sigma(t), t)$, $\omega$'s stay the same.)  
    
    \item $F_t$: $F_t$ is a sub-event of $E_t$. $F_t$ occurs if $\omega$ is incremented at iteration $t$ {\em and} during at least one iteration $r\in (\sigma(t), t]$, a good mutation happened. I.e., there is a single element $v$ added to $X_{r-1}$, and the element is exactly $v = \argmax_{e\in \OPT_{\beta} \setminus (X_{r-1} \cup \{o^*\})} \frac{f(e|X_{r-1})}{c(e)}$.
\end{itemize}

In the example of Table \ref{t:sigma}, at iterations $t=3, 6, 20$, at both ends of the interval $(\sigma(t),t]$, the corresponding $\omega({\sigma(t)})$ and $\omega(t)$ are incremented. By definitions of events $E_t$ and $F_t$, the underlined iterations in the table are event $E_t$ happened and others are $\neg E_t$.

The next claim correlates the increment of $\omega$ with a good mutation at the end of an iteration $t$.

\begin{manualclaim}{2}
For every iteration $t \in [1, T]$,
    \[\Pr[F_t \mid E_t] \geq 1-\epsilon.\]
    \label{claim:Ft-app}
\end{manualclaim}
\begin{proof}
    The probability that the \Mutate procedure flips the membership of a specific element, nothing else, is 
    \[ \rho = \frac{1}{n}\cdot (1-\frac{1}{n})^{n-1}\geq \frac{1}{en}.\]
    Given $E_t$ happens, $\ell$ has been incremented for $H$ times during iterations $(\sigma(t), t]$. Among the $H$ mutations, \textit{at least one} good mutation exists will result in $F_t$. Therefore, the probability of $\Pr[F_t \mid E_t]$ is   
    \begin{align*}
        \Pr[F_t \mid E_t] &\geq 1 - (1-\rho)^H \\
        &\geq 1-(1-\frac{1}{en})^{en\log{(1/\epsilon)}}\\
        &\geq 1-\epsilon.
    \end{align*}
    
\end{proof}

\begin{manuallemma}{5}
    For every iteration $t$ such that $1 \leq t \leq T$, let $X_t$ be the set with size $\omega(t)$ at the end of iteration $t$, then with probability $1-\epsilon$, we have:
    \begin{equation}
        f(X_t) \geq \frac{c(X_t)}{2(c(\OPT_{\beta})-c(o^*))}f(\OPT_{\beta}). 
        \label{app:eq:bpo-lemma}
    \end{equation}
    \label{app:lem:Xt-bpo}
\end{manuallemma}

\begin{proof}
We prove this lemma by strong induction. 

\medskip
\noindent
\textbf{Base Cases: }
The base cases of the strong induction include all iterations $t \in [1,d]$ where $d = \min\{t \in \{1, \cdots, T\}: \omega(t) =1\}$. It is clear that during iterations $t \in [1,d)$, $\omega(t)=0$, which is $\neg E_t$. At iteration $t=d$, $E_t$ happens. So, we have two sub-cases B-1 and B-2.
\begin{itemize}
    \item \textbf{Case B-1}. For every iteration $t \in [1,d)$, $\omega(t)=0$, event $E_t$ did not happen. $\omega$ was not incremented, then we have $\omega(t)=\omega(0)=0$, $|X_t|=|X_0|=0$. We can verify the lemma holds by:
        \[f(X_t) = f(X_0)= 0 \geq \frac{c(\emptyset)}{2(c(\OPT_{\beta})-c(o^*))}f(\OPT_{\beta})=0.\]
        
    Thus, this lemma holds for the any iteration $t \in [1,d)$.

    \item \textbf{Case B-2} At iteration $t=d$, $\omega$ was incremented ($E_t$ happened), so $|X_d|=\omega(d)=\omega(d-1)+1=1$. By Claim \ref{claim:Ft-app}, $F_t$ also happened with probability $1-\epsilon$.
    
    Then conditioned on $F_t$,
    there exists at least one good mutation at some iteration $r \in [1,d]$. Assume that the good element in Definition \ref{def:w} is $v$. By the monotonicity of function $g$ on the same size sets guaranteed by the algorithm, we have $g(X_d) \geq g(\{v\})$.

     Lemma \ref{lem:greedy} only requires the second condition of good mutation (Def. \ref{def:w}) happening on every given set $S$, therefore, this lemma also applies to Algorithm \ref{alg:bpo} conditioned on $F_t$. Applying Lemma \ref{lem:greedy}, we obtain
    \begin{align*}
        f(\{v\}) - f(\emptyset) 
        &\geq \frac{c(v)}{c(\OPT_{\beta})-c(o^*)}(f(\OPT_{\beta})-f(\{o^*\})) \quad\quad \mbox{\em due to Lem. \ref{lem:greedy} with $S=X_0=\emptyset$}\\
        & \geq \frac{c(v)}{2(c(\OPT_{\beta})-c(o^*))}f(\OPT_{\beta}). \quad\quad \mbox{\em assume $f(\{o^*\}) < \frac{1}{2}f(\OPT_{\beta})$ }
    \end{align*}
    The last inequality holds due to the assumption $f(\{o^*\}) < \frac{1}{2}f(\OPT_{\beta})$. This assumption is safe because if $f(\{o^*\}) \geq \frac{1}{2}f(\OPT_{\beta})$, Algorithm \ref{alg:bpo} already admits $\nicefrac{1}{2}$ approximation for line 19 of the algorithm.
   
    Then by the definition of $g$, 
    \[f(X_d) \geq \frac{c(X_d)}{c(v)} f(\{v\}) \geq \frac{c(X_d)}{2(c(\OPT_{\beta})-c(o^*))}f(\OPT_{\beta}).\]

Therefore, this lemma holds for iteration $t=d$.
\end{itemize}     

The proofs of Case B-1 ($\neg E_t$) and Case B-2 ($F_t$) of indicate that Lemma \ref{app:lem:Xt-bpo} holds for the base cases. The probability combined is,
\begin{align*}
    \Pr[\neg E_t] + \Pr[F_t] &= \Pr[\neg E_t] + \left(\Pr[E_t] - \Pr[\neg F_t \mid E_t]\right)\\
    &= 1-\Pr[\neg F_t \mid E_t] \\
    &= \Pr[F_t \mid E_t] \\
    &\geq 1-\epsilon. \quad\quad\quad\quad \mbox{\em follows from Claim \ref{claim:Ft-app} } 
\end{align*}

Thus, for the base cases, Lemma \ref{lem:Xt-bpo} holds with probability at least $1-\epsilon$.

\medskip
\noindent        
\textbf{Induction Steps: } Next, we prove for iterations $t\in (d, T]$ with $\omega(t) \geq 1$. We have two cases -- Event $E_t$ does not occur ($\neg E_t$) and event $E_t$ occurs ($E_t$).

We consider a fixed iteration $t$. Different from the analysis of Algorithm \ref{alg:po}, our strong induction hypothesis(I.H.) here is
\begin{equation}
    \label{eq:strong-IH}
    \forall ~r \in [1, t):~ f(X_r) \geq \frac{c(X_r)}{2(c(\OPT_{\beta})-c(o^*))}.
\end{equation}

\begin{itemize}
    \item \textbf{Case I-1: Event $E_t$ does not occur ($\neg E_t$)}.\\
    Event $E_t$ does not occur implies that during every iteration $r \in (\sigma(t), t]$, $\omega$ never incremented and $\omega(t)=\omega(r)$. Therefore, $|X_t|=|X_r|$ and the algorithm will compare their $f$ and $g$ valuations.

    At iteration $t$, the algorithm selected $X_{t-1}$ for mutation with a probability at least $p$ (line 6 of Algorithm \ref{alg:bpo}). By Definition \ref{def:g} and conditioned on $\neg E_t$, we have
    \[
        g(X_t) \geq g(X_{t-1}) \Rightarrow  \frac{f(X_t)}{c(X_t)} \geq \frac{f(X_{t-1})}{c(X_{t-1})} .
    \]
    Rearranging the above inequality, we have
    \begin{align*}
        f(X_t) &\geq \frac{c(X_t)}{c(X_{t-1})}f(X_{t-1})\\
        &\geq \frac{c(X_t)}{c(X_{t-1})}\cdot \frac{c(X_{t-1})}{2(c(\OPT_{\beta})-c(o^*))}f(\OPT_{\beta}) \quad\quad\mbox{\em due to I.H. (Eq. (\ref{eq:strong-IH}))}\\
        &=\frac{c(X_t)}{2(c(\OPT_{\beta})-c(o^*))}f(\OPT_{\beta}).
    \end{align*}
    Therefore, condition on $\neg E_t$, we have
    \begin{equation}
    \label{eq:X_t-not_Et}
        f(X_t) \geq \frac{c(X_t)}{2(c(\OPT_{\beta})-c(o^*))}f(\OPT_{\beta}).
    \end{equation}

    \item \textbf{Case I-2: Even $E_t$ occurs.}\\
    If Event $F_t$ occurs (based on $E_t$) with probability at least $1-\epsilon$, there is at least one good mutation during some iteration $r \in (\sigma(t), t]$. Suppose that a good mutation happened at iteration $r$. $X_{r-1}$ was selected for mutation and element $v$ was added (defined in the second condition of Definition \ref{def:w}). We have $\omega(t) = \omega(r)+1$ and $|X_t|=|X_{r-1} \cup \{v\}|$. The algorithm will compare the $f$ and $g$ valuations of $X_t$ and $X_{r-1} \cup \{v\}$.

    Conditioned on $F_t$ and by the definition of $g$, we obtain the following inequality:
    \begin{equation}
        g(X_t) \geq g(X_{r-1} \cup \{v\})  
        \\ 
        \Rightarrow \frac{f(X_t)}{c(X_t)} \geq \frac{f(X_{r-1} \cup \{v\}) }{c(X_{r-1}\cup \{v\}) }.
        \label{eq:g-Ft}
    \end{equation}

    Next we derive a lower bound on $f(X_{r-1} \cup \{v\})$.
    \begin{align}
         &f(X_{r-1} \cup \{v\})\nonumber \\
         &\geq \frac{c(v)}{c(\OPT_{\beta})-c(o^*)} \left(f(\OPT_{\beta})-f(X_{r-1} \cup \{o^*\}) \right)+f(X_{r-1}) \nonumber\\
        &= \frac{c(v)}{c(\OPT_{\beta})-c(o^*)} f(\OPT_{\beta}) + f(X_{r-1}) - \frac{c(v)}{c(\OPT_{\beta})-c(o^*)}f(X_{r-1} \cup \{o^*\}) \quad\quad \mbox{\em due to Lemma \ref{lem:greedy}} \nonumber\\
        &\geq \frac{c(v)}{c(\OPT_{\beta})-c(o^*)} f(\OPT_{\beta}) + f(X_{r-1}) - \frac{c(v)}{2(c(\OPT_{\beta})-c(o^*))}f(\OPT_{\beta}) \quad\quad \mbox{\em follows from Eq. (\ref{eq:lem-assumption})} \nonumber\\
        &=\frac{c(v)}{2(c(\OPT_{\beta})-c(o^*))}f(\OPT_{\beta}) + f(X_{r-1}) \nonumber\\
        &\geq \frac{c(v)}{2(c(\OPT_{\beta})-c(o^*))}f(\OPT_{\beta}) + \frac{c(X_{r-1})}{2(c(\OPT_{\beta})-c(o^*))}f(\OPT_{\beta})\quad\quad \mbox{\em due to I.H. (Eq. \ref{eq:strong-IH})}\smallskip \nonumber\\
        &\geq \frac{c(X_{r-1} \cup \{v\})}{2(c(\OPT_{\beta})-c(o^*))}f(\OPT_{\beta}). \label{eq:X_t-Et-2}
    \end{align}
    Therefore, with Eq. (\ref{eq:g-Ft}) and Eq. (\ref{eq:X_t-Et-2}), we have
    \[f(X_t) \geq \frac{c(X_t)}{c(X_{r-1} \cup \{v\})} f(X_{r-1} \cup \{v\}) \geq \frac{c(X_{t})}{2(c(\OPT_{\beta})-c(o^*))}f(\OPT_{\beta}). \]
\end{itemize}
Analogous to the analysis of the base cases, Lemma \ref{lem:Xt-bpo} holds with probability at least $1-\epsilon$ for the induction steps.
\end{proof}

Subsequently, we bound the number of total iterations $T$ of the algorithm.

\begin{manuallemma}{6}
    In Algorithm \ref{alg:bpo}, define binary random variables  $Y_i$ for iteration $i\in [1,T]$ where $T \geq 2enK_{\beta}\ln{(1/\epsilon)}/p$ such that
    \[ Y_i =
  \begin{cases}
    ~1       & \quad \ell~\text{is incremented at iteration }i\\
    ~0  & \quad \text{otherwise.}
  \end{cases}
    \]    
    Then, we have
    \[ \Pr\left[\sum_{i=1}^T Y_i < HK_{\beta}\right] \leq \epsilon. \]
\end{manuallemma}
\begin{proof}
     The lower bound probability of $Y_i=1$ for an iteration $i$ is $p$, which is determined by line 5 of Algorithm \ref{alg:bpo}. Therefore, $\mu = \E[Y] = \E[\sum_{i=1}^n Y_i] = Tp$. It's trivial that $Y_i's$ are independent and take binary values \{0,1\} with identical expectations, so we can apply Chernoff bound (Lemma \ref{lem:chernoff}) to prove that after $T$ iterations, $\ell$ has been incremented for $HK_\beta$ times ($\omega$ has been incremented to $K_\beta$) with probability at least $1-\epsilon$.
    
    \begin{align*}
        \Pr\left[\sum_{i=1}^T Y_i < HK_{\beta}\right] &\leq \Pr\left[\sum_{i=1}^T Y_i < enK_{\beta}\ln{(1/\epsilon)}\right]\\
        &\leq \Pr\left[\sum_{i=1}^T Y_i < T p/2\right]\\
        &\leq e^{-Tp/8}\\
        &\leq \epsilon.
    \end{align*}

    The second inequality holds if we take $T\geq 2enK_{\beta}\ln{(1/\epsilon)}/p$. The third inequality is by applying Chernoff bound Lemma \ref{lem:chernoff} with $\eta=1/2$. The last inequality holds if we take $T\geq 8\log{(1/\epsilon)}/p$.
    
\end{proof}

The $1/2$ constant approximation can be justified using the idea in Section \ref{sec:po}.

\end{proof}
\section{Additional Experiments and Justifications}
\label{app:results}
In this section, we provide the experiment implementation details omitted in Section \ref{sec:experiements}. We also present extensive experimental results and analysis. Some claims in Section \ref{sec:experiements} of the main paper also will be further justified.

\subsection{Implementation Accelerations}
To speed up the evolutionary algorithms, we conduct pre-check procedures on the mutated set $S'$ before evaluating its oracle value $f(S')$. The pre-check procedures include two tasks:
\begin{itemize}
    \item \textit{Task 1}: Check if $S'$ stayed the same as $S$ (the set input to Algorithm \ref{alg:mutate}). 
    \item \textit{Task 2}: Check if $S'$ has already been evaluated in a previous iteration. This task can be done by maintaining a list/hashmap of all intermediate solution sets that have been evaluated, but it is memory and time-inefficient. So, we consider an alternative data structure -- bloom filter \cite{bloom1970space}.
\end{itemize}
If the pre-check procedures return ``yes" in any of the two tasks, $S'$ should not be evaluated.

Next, we argue why the probability of the first task returns ``yes" with non-trivial probability. We also give the implementation details of a bloom filter for Task 2.

\subsubsection{Probability that a set stays the same after mutation.} In the mutation procedure (Algorithm \ref{alg:mutate}), every element of a set $S$ is flipped membership with probability $1/n$ independently. Therefore, the probability that no element was flipped membership is 
 $\rho = (1-\frac{1}{n})^n$. If $n>1$, then
 $\rho \in [\frac{1}{4}, \frac{1}{e}]$, which is non-trivial.  

\subsubsection{Bloom Filter Implementation Details}
A Bloom Filter is a space-efficient probabilistic data structure used for membership testing \cite{bloom1970space}. When implemented with multiple hash functions, it becomes more robust. Bloom Filters are used in scenarios where memory efficiency is crucial, such as network routers, spell checkers, and databases to quickly filter out unnecessary queries.

In evolutionary algorithms discussed in this paper, we want to know if a mutated set $S'$ has been evaluated in the previous iterations. However, maintaining all evaluated sets in the main memory costs a lot of memory and the query time is pricy. A good approach is to maintain a bloom filter that can quickly give us the answer. The bloom filter should have at least three methods: Initialization(), Check($S'$), and Insert($S'$). We present the details of how we implement bloom filters in the evolutionary algorithms in this paper.

\begin{enumerate}
    \item \textbf{Initialization()}\\
    This function initializes the bloom filter data structures and hash functions.
    Allocate a bit array $B$ of size $m = 16T$ where $T$ is the number of iterations input to the algorithm. Initialize $B[j] = 0$ for all $j \in \{0,\cdots,m-1\}$. In C++ and Java, each unit of a bit array takes 1 bit of memory. Choose $k = 16 \ln 2$ independent hash functions, each producing values in the $[0, m-1]$ range. We use random hash functions defined by $h_i(S) = (a_i \cdot S + b_i) \% m$ where $i \in \{1,2,\cdots, k\}$, $a_i$ and $b_i$ are random positive integers.

    The initialization is called only once at the beginning of the algorithm.
    
    \item \textbf{Check($S'$)}\\
    Before evaluating a mutated set $S'$, we check if $S'$ is in the Bloom Filter. If so, we don't need to evaluate $f(S')$, since oracle computation is time-consuming. 
    
    The Check procedure behaves: Apply each of the $k$ hash functions to $S'$ ($S'$ is a set or 0-1 string but can be converted to an integer). Check if $B[h_i(S')] = 1$ for all $i \in \{1,2,\cdots, k\}$. If true, then $S'$ has probably been evaluated during some previous iteration, and the Check returns ``yes". If there exists $i \in \{1,2,\cdots, k\}$ such that $h_i(S') = 0$, then $S'$ has definitely not been evaluated and the Check returns ``no". 
    \item \textbf{Insert($S'$)}\\
    If the Check procedure returns ``no" given a set $S'$, we must evaluate $f(S')$. After the evaluation, we will insert $S'$ to the bloom filter -- meaning $S'$ has been evaluated.
   
    To insert $S'$ into the bloom filter, we apply each of the $k$ hash functions to $S'$. Set $B[h_i(S')] = 1$ for all $i \in \{1,2,\cdots, k\}$.
\end{enumerate}

\paragraph{False Positive Rate.} The probability of a false positive is determined by the number of hash functions ($k$), the size of the bit array ($m$), and the number of elements ($n$). We denote the ratio of   the bit array size and set size as $\tau=\frac{m}{n}$, also known as bits-per-element. A false positive is caused by the collision of hash functions when distinct sets have the same hash values. In theory, the optimal number of hash functions to achieve the smallest false positive rate is $k= \tau \cdot \ln 2$ and the corresponding false positive rate is $0.618^\tau$. 
Notably, if $\tau = 16$, the false positive rate is approximately $0.0004$.

\subsubsection{Effectiveness Study}
We study the effectiveness of the accelerations we discussed in this paper. We run {\sc evo-SMC} on the Film-Trust network with the application of influence maximization with costs. In the experiments, we run  various budgets ($\beta=20,35$ and $50$). During a run of an experiment, we equally set 15 breakpoints to record the results. (If the an algorithm needs $T$ iterations, we output the intermediate results at iterations $\{\frac{1}{15}T,\frac{2}{15}T, \cdots, T\}$). At each breakpoint, we record ``the ratio of mutations that stay the same" (Task 1) and the ``ratio of mutations that have been evaluated" (Task 2).  

The results show that, for every choice of $\beta$, ``the ratio of mutation staying the same" falls between 0.36 and 0.37, which is consistent with the theoretical results. Moreover, the bloom filters can effectively prevent repetitive evaluations (17\% to 25\% of the mutations). 

\subsubsection{Oracle Evaluations for Augmentations (line 19 of Algorithm \ref{alg:bpo})}
A solution can be augmented at line 19 of Algorithm \ref{alg:bpo}, which can take at most $n-1$ oracle evaluations to function $f$. In the last paragraph of Section \ref{sec:experiements} in the main paper, we observed that, in the experiments, the budget uses up quickly, and only a few portions of elements need to be evaluated. Therefore, we also present where the observation is from by studying the experiment on the Film-Trust network (same experimental settings as in the ``Effectiveness Study").

At each breakpoint, we recorded the cost of the set that was selected for mutation. The plot is shown in Figure \ref{fig:costs-app}. We observe that the evolutionary algorithms use up the budget fast, at approximately the halfway mark. There is no significant difference among various budget settings. In addition, we also focus on the percentage of elements we still need to consider for a solution augmentation (at line 19 of Algorithm \ref{alg:bpo}).  Figure \ref{fig:remaining-app} plots the ratio of elements required to evaluate. With fewer budgets left, the number of feasible elements for augmentations decreases rapidly. Smaller cost constraints result in a more rapid decrease. 

\begin{figure}[H]
\centering
\subfigure[]{
\centering
\includegraphics[width=0.4\textwidth]{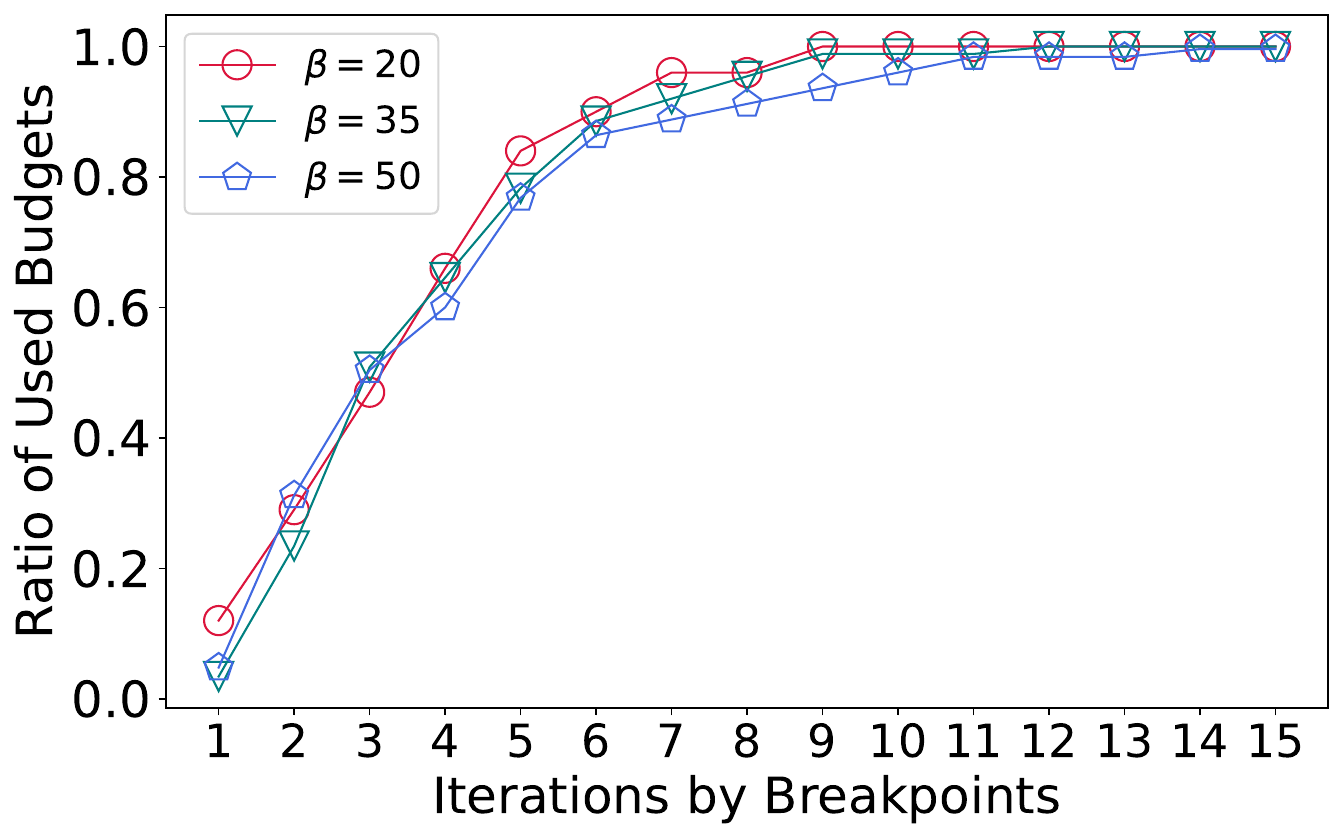}
\label{fig:costs-app}
}
~~~~
\subfigure[]{
\centering
\includegraphics[width=0.4\textwidth]{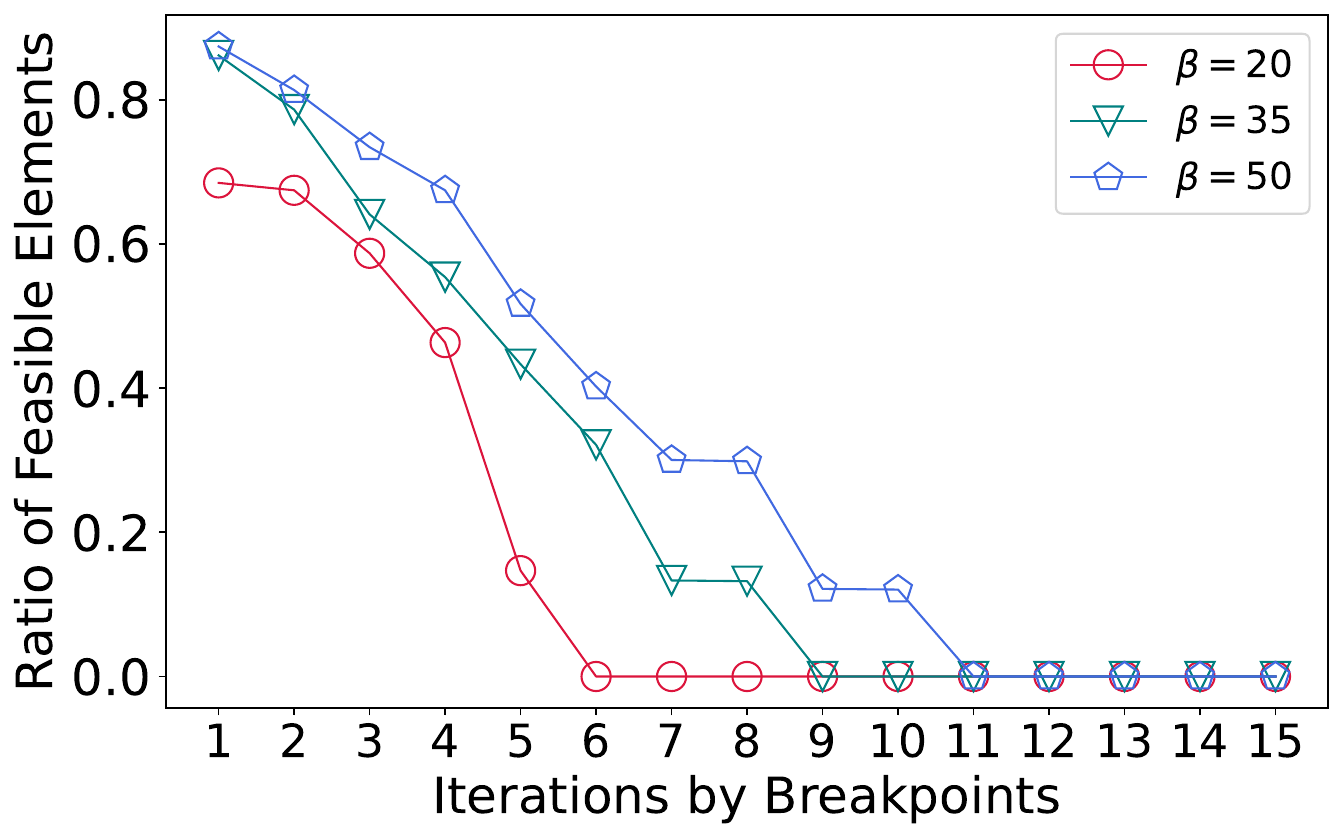}
\label{fig:remaining-app}
}
\caption{At each breakpoint, we record the cost of the selected set $S$ and compute $c(S)/\beta$ in Fig. \ref{fig:costs-app}. We also retrieve $G_i$ and compute the number of feasible elements that can be considered to augment it.}
\end{figure}

\subsection{Additional Experimental Results and Analysis}
We present additional experiments with various stochasticity probabilities $p$ on the applications of influence maximization and vertex cover with costs. 

The results on Influence Maximization with various of $\epsilon$ are illustrated in Figure \ref{fig:film-eps-app} and \ref{fig:fb-eps-app}. Additional vertex cover results are shown in Figure \ref{fig:protein-q5-eps-app}, \ref{fig:protein-q10-eps-app} and Figure \ref{fig:email-q5-eps-app}, \ref{fig:email-q10-eps-app}. Our algorithms consistently outperform EAMC and Greedy+Max after a few additional evaluations. For $\epsilon=0.5$, in most scenarios, the algorithm {\sc st-evo-SMC} grows fast and results in a good quality solution. Nonetheless, {\sc st-evo-SMC} has better performance with $\epsilon=0.1$; it performs the best in general. Since the theoretical runtime bound of {\sc st-evo-SMC}, $T=O(nK_\beta \log (1/\epsilon)/p)$ is proportional to $\log (1/\epsilon)$, one would be suggested setting a relatively small $\epsilon$ to guarantee a good performance. At the same time, it will not sacrifice the running time a lot.

The results on Influence Maximization with various stochasticity probability $p$ are illustrated in Figure \ref{fig:film-p-app} and \ref{fig:fb-p-app}. Additional vertex cover results are shown in Figure \ref{fig:protein-q5-p-app}, \ref{fig:protein-q10-p-app} and Figure \ref{fig:email-q5-p-app}, \ref{fig:email-q10-p-app}. We can see that the with larger $p$, the increase rates are faster (Figure \ref{fig:film-p-app}, \ref{fig:email-q10-eps-app}, \ref{fig:email-q5-eps-app}, etc. ). With both $p=0.2$ and $p=0.5$, {\sc st-evo-SMC} is more efficient and can achieve good-quality results.

\begin{figure}[H]
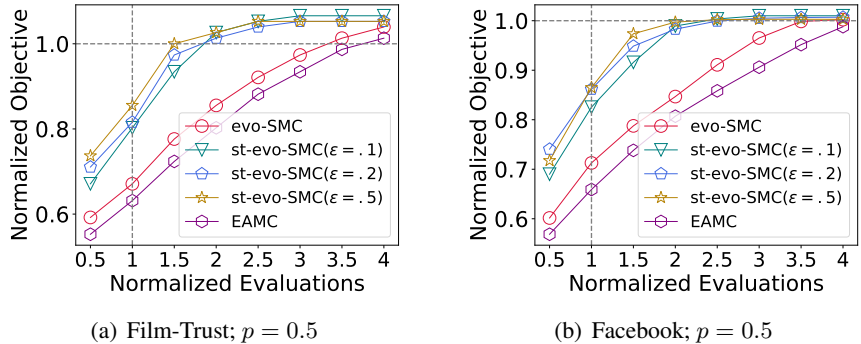

\centering
\subfigure[Film-Trust; $p=0.5$]{
\centering
\includegraphics[width=0.35\textwidth]{figures/film-trust-e.pdf}
\label{fig:film-eps-app}
}
~~~~
\subfigure[Facebook; $p=0.5$]{
\centering
\includegraphics[width=0.35\textwidth]{figures/facebook-e.pdf}
\label{fig:fb-eps-app}
}
\caption{Comparisons on Film-Trust and Facebook networks with various $\epsilon$.}
\end{figure}

\begin{figure}[t]
\centering
\subfigure[Film-Trust, $\epsilon=0.1$]{
\centering
\includegraphics[width=0.35\textwidth]{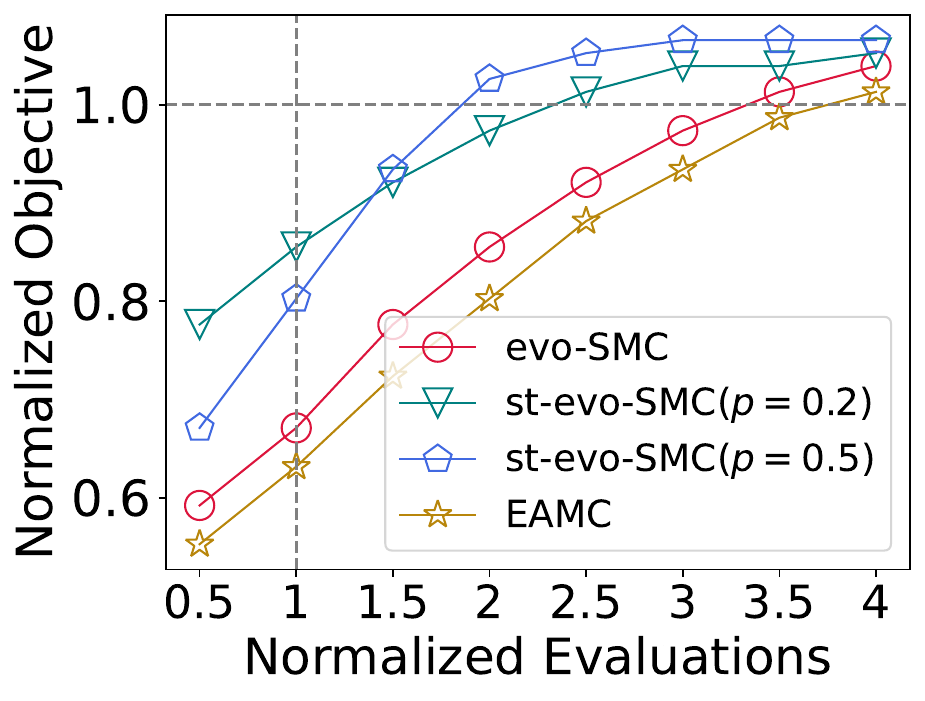}
\label{fig:film-p-app}
}
~~~~
\subfigure[Facebook, $\epsilon=0.1$]{
\centering
\includegraphics[width=0.35\textwidth]{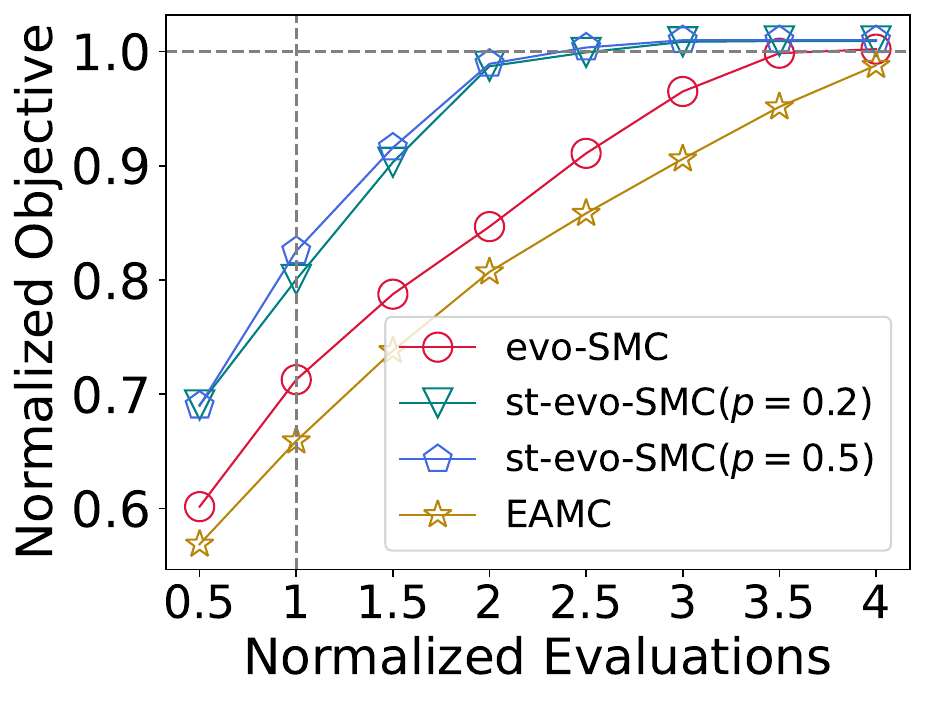}
\label{fig:fb-p-app}
}
\caption{Comparisons on Film-Trust and Facebook networks with various $p$}
\end{figure}

\begin{figure}[H]
\centering
\subfigure[Protein; $q=5$, $p=0.5$]{
\centering
\includegraphics[width=0.35\textwidth]{figures/protein-e-q5.pdf}
\label{fig:protein-q5-eps-app}
}
~~~~
\subfigure[Protein; $q=10$, $p=0.5$]{
\centering
\includegraphics[width=0.35\textwidth]{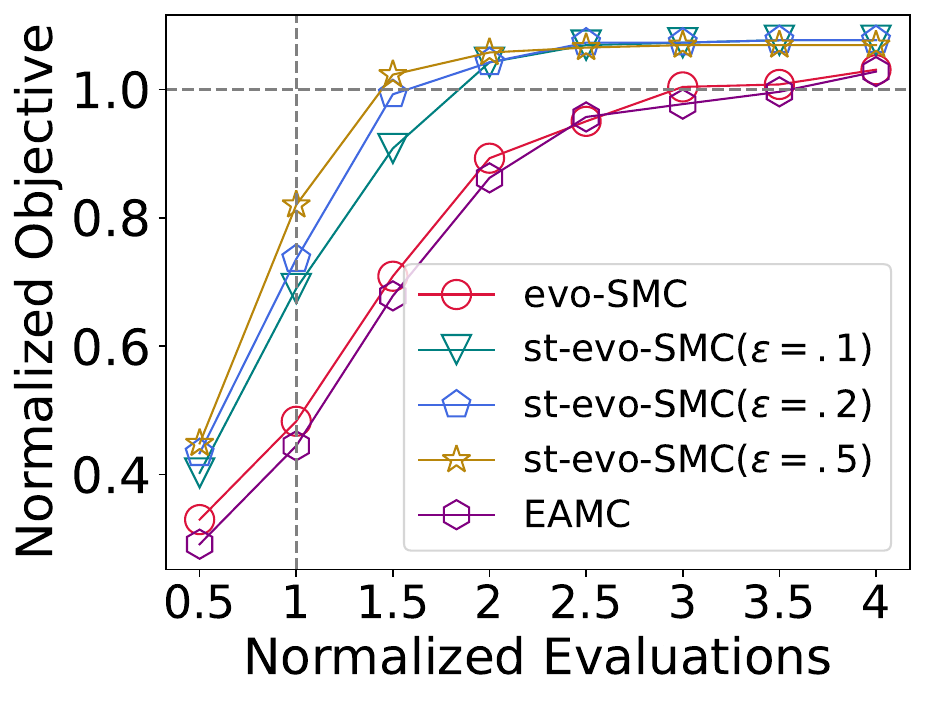}
\label{fig:protein-q10-eps-app}
}
\caption{Comparisons on Protein network with various $\epsilon$ and fixed cost penalty $q=5$ and $q=10$.}
\end{figure}

\begin{figure}[t]
\centering
\subfigure[Protein; $q=5$, $\epsilon=0.1$]{
\centering
\includegraphics[width=0.35\textwidth]{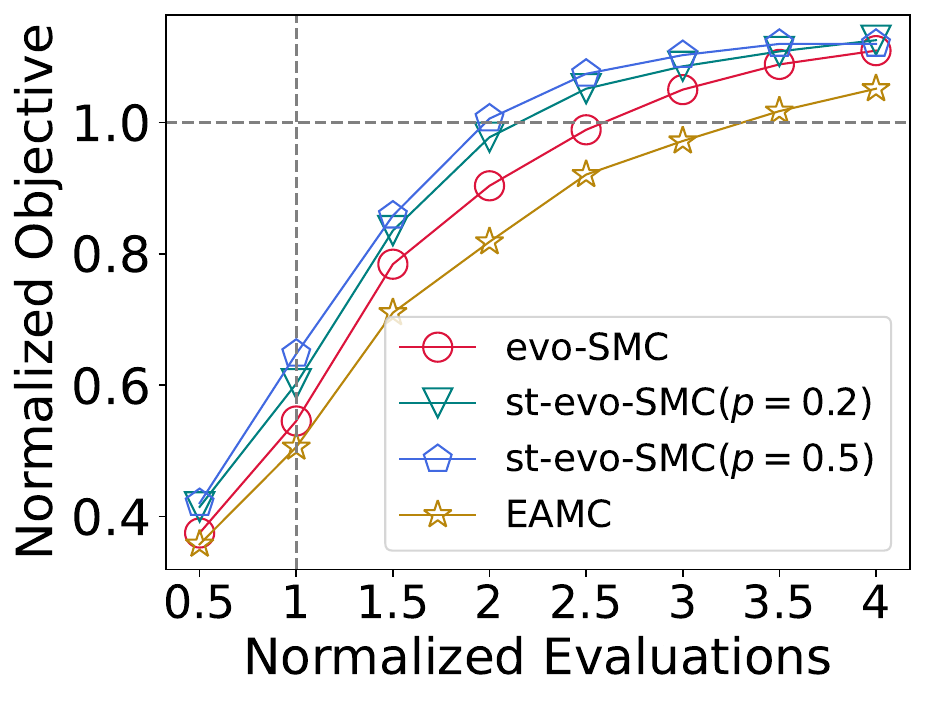}
\label{fig:protein-q5-p-app}
}
~~~~
\subfigure[Protein; $q=10$, $\epsilon=0.1$]{
\centering
\includegraphics[width=0.35\textwidth]{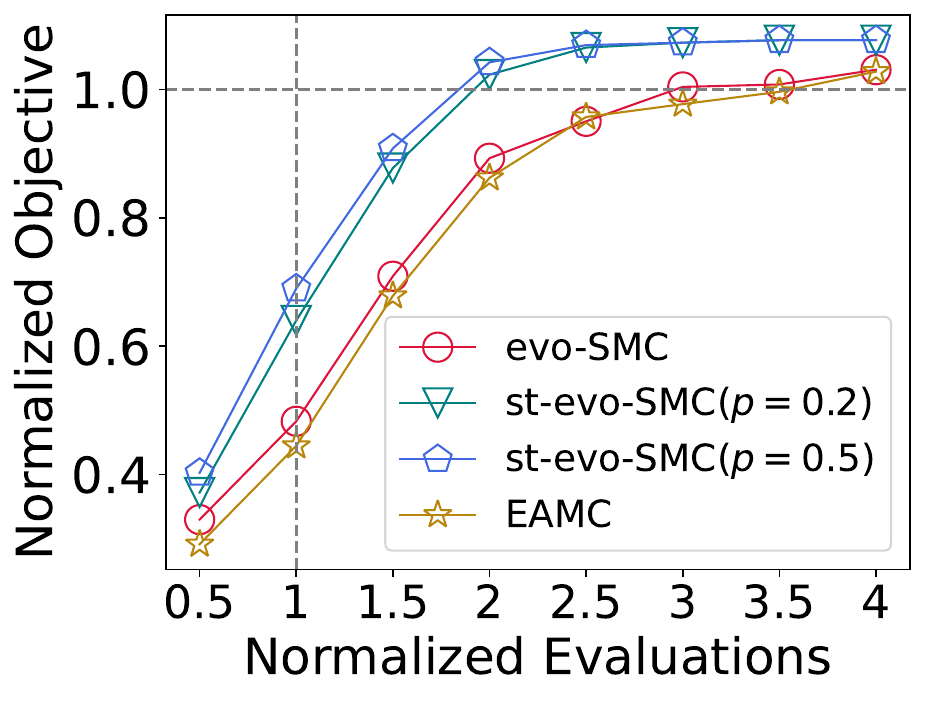}
\label{fig:protein-q10-p-app}
}
\label{fig-protein-p}
\caption{Comparisons on Protein network with various $p$ and fixed cost penalty $q=5$ and $q=10$.}
\end{figure}

\begin{figure}[H]
\centering
\subfigure[Eu-Email; $q=5$, $p=0.5$]{
\centering
\includegraphics[width=0.35\textwidth]{figures/email-e-q5.pdf}
\label{fig:email-q5-eps-app}
}
~~~~
\subfigure[Eu-Email; $q=10$, $p=0.5$]{
\centering
\includegraphics[width=0.35\textwidth]{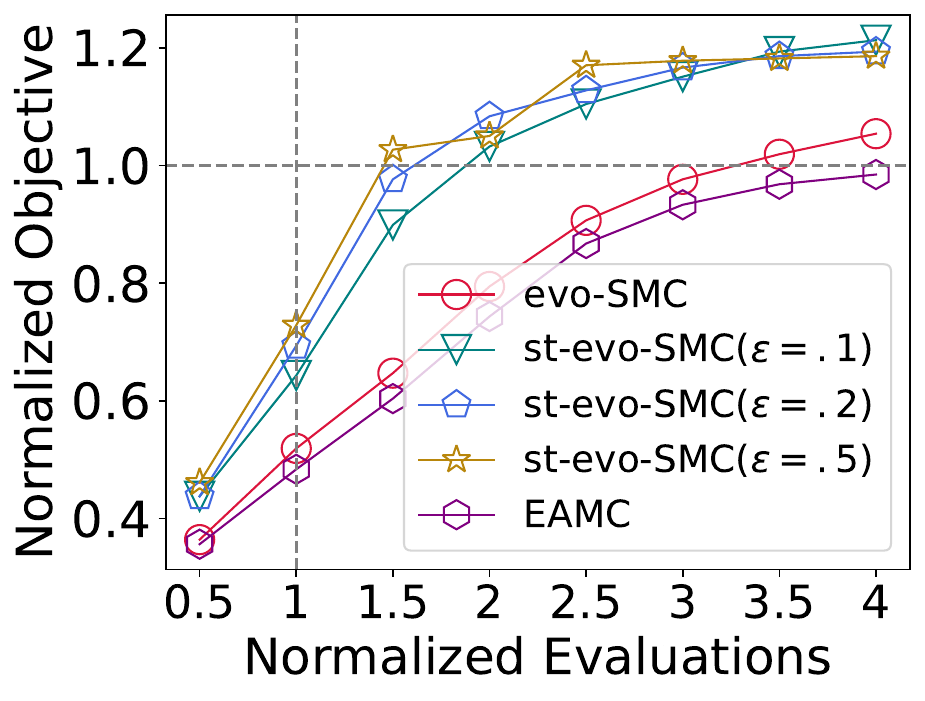}
\label{fig:email-q10-eps-app}
}
\caption{Comparisons on Eu-Email network with various $\epsilon$ and fixed cost penalty $q=5$ and $q=10$.}
\end{figure}

\begin{figure}[H]
\centering
\subfigure[Eu-Email; $q=5$, $\epsilon=0.1$]{
\centering
\includegraphics[width=0.35\textwidth]{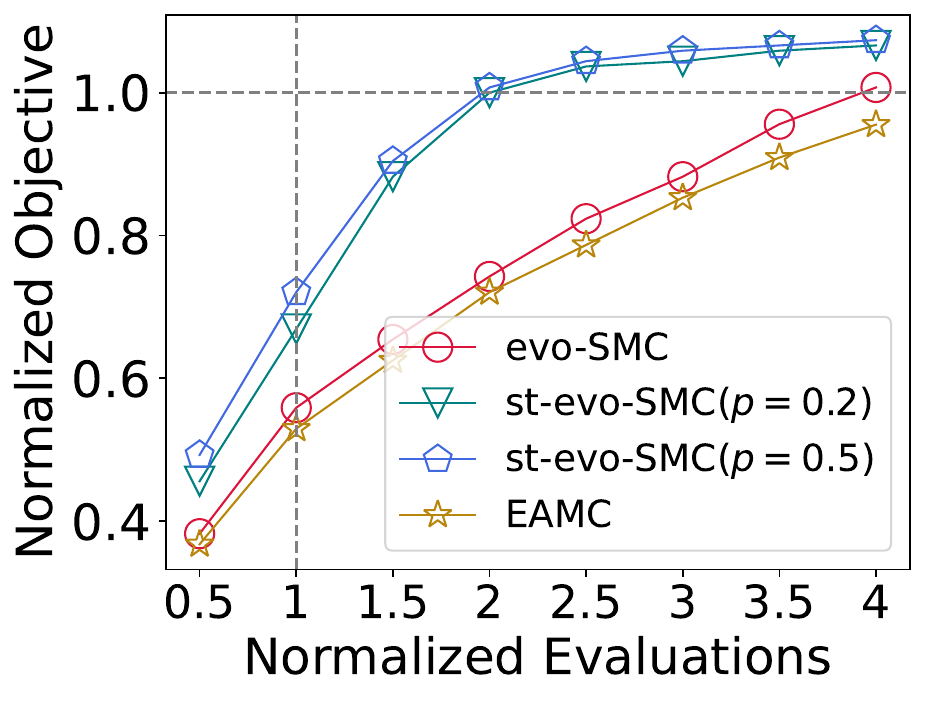}
\label{fig:email-q5-p-app}
}
~~~~
\subfigure[Eu-Email; $q=10$, $\epsilon=0.1$]{
\centering
\includegraphics[width=0.35\textwidth]{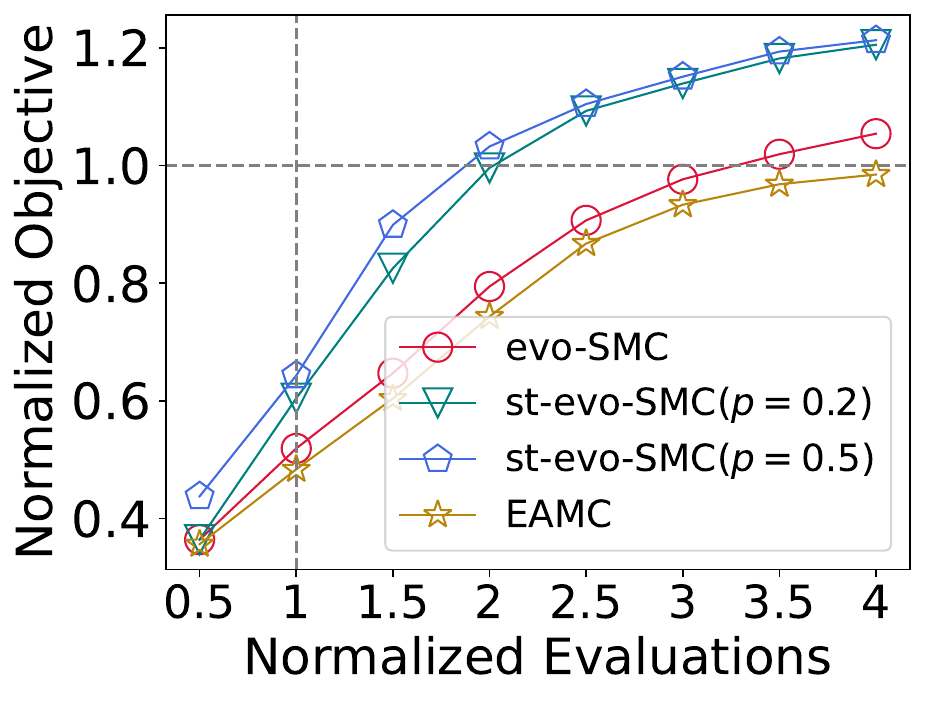}
\label{fig:email-q10-p-app}
}
\caption{Comparisons on Eu-Email network with various $p$ and fixed cost penalty $q=5$ and $q=10$.}
\end{figure}

\bibliography{references}

\bibliographystyle{tmlr}

\end{document}